\documentclass[journal]{IEEEtran}

\usepackage[T1]{fontenc}
\usepackage{graphicx,colordvi,psfrag}
\usepackage{amsmath,amssymb}
\usepackage{calc,pstricks, pgf, xcolor}
\usepackage{bbding}
\usepackage{bbm}
\usepackage{dsfont}
\usepackage{stfloats}

\newcommand{\Ind}{\mathds{1}}

\newcommand{\RR}{\mathbb{R}}

\newcommand{\PP}{\mathbb{P}}

\newcommand{\Ber}{\mathop{\mathrm{Ber}}}

\newcommand{\Unif}{\mathrm{Uniform}}

\def\mreals{\mathbb{R}}
\def\eqdef{\triangleq}
\def\Sph{\mathbb{S}}
\def\EE{\mathbb{E}}

\newtheorem{theorem}{Theorem}

\newtheorem{proposition}{Proposition}
\newtheorem{corollary}{Corollary}
\newtheorem{remark}{Remark}

\newtheorem{lemma}{Lemma}

\newenvironment{proof}[1][Proof]{\noindent\textbf{#1.} }{\ \rule{0.5em}{0.5em}}

\oddsidemargin  0.0in
\evensidemargin 0.0in
\leftmargin     0.0in
\textwidth      6.5in
\headheight     0.0in
\topmargin      0.0in
\textheight     9.0in

\begin{document}
\title{A Note on the Probability of Rectangles for Correlated Binary Strings}

\author{Or Ordentlich, 
	Yury Polyanskiy, 
	and Ofer Shayevitz
	\thanks{}
\thanks{O. Ordentlich is with the Rachel and Selim Benin School of Computer Science and Engineering, Hebrew University of Jerusalem, Israel (email: or.ordentlich@mail.huji.ac.il).  Y. Polyanskiy is with the Massachusetts Institute of Technology, MA, USA (email: yp@mit.edu). O. Shayevitz is with the Department of Electrical Engineering Systems, Tel Aviv University, Tel Aviv, Israel (e-mail: ofersha@eng.tau.ac.il).}
\thanks{The work of O.O. was supported by the ISF under Grant 1791/17.
	The work of Y.P. was supported (in part) by the National Science Foundation under Grant No CCF-17-17842, and by the
	Center for Science of Information (CSoI), an NSF Science and Technology Center, under grant agreement CCF-09-39370. The work of O.S. was supported by the European Research Council, under grant agreement 639573.}
\thanks{}}

\date{}

\maketitle
\begin{abstract}
	Consider two sequences of $n$ independent and identically distributed fair coin tosses, $X=(X_1,\ldots,X_n)$ and
$Y=(Y_1,\ldots,Y_n)$, which are $\rho$-correlated for each $j$, i.e. $\PP[X_j=Y_j] = {1+\rho\over 2}$. We study the
question of how large (small) the probability $\PP[X \in A, Y\in B]$ can be among all sets $A,B\subset\{0,1\}^n$
of a given cardinality. For sets $|A|,|B| = \Theta(2^n)$ it is well known that the largest (smallest) probability is approximately attained by concentric (anti-concentric) Hamming balls, and this can be proved via the hypercontractive inequality (reverse hypercontractivity). Here we consider the case of $|A|,|B| = 2^{\Theta(n)}$. By applying a recent extension of the
hypercontractive inequality of Polyanskiy-Samorodnitsky (J. Functional Analysis, 2019), we show that Hamming balls of the same size
approximately maximize $\PP[X \in A, Y\in B]$ in the regime of $\rho \to 1$. We also prove a similar tight lower bound, 
i.e. show that for $\rho\to 0$ the pair of opposite Hamming balls approximately minimizes the probability $\PP[X \in A, Y\in B]$.
\end{abstract}

\section{Introduction}


Let $X\sim\Unif\left(\{0,1\}^n\right)$ and $Y\in\{0,1\}^n$ be a $\rho$-correlated copy of $X$, where $0\leq \rho<1$, i.e., 
\begin{align}
&\Pr(Y=y|X=x)\nonumber\\
& =\prod_{i=1}^n\left(\frac{1-\rho}{2}\right)^{d(x_i,y_i)}\left(\frac{1+\rho}{2}\right)^{1-d(x_i,y_i)}\nonumber\\
&=\left(\frac{1+\rho}{2}\right)^n\cdot\left(\frac{1-\rho}{1+\rho}\right)^{d(x,y)},
\end{align}
where $d(x_i,y_i)=\Ind_{\{x_i\neq y_i\}}$ and $d(x,y)=\sum_{i=1}^n d(x_i,y_i)$.
For $A,B\subset\{0,1\}^n$, we denote $P_{XY}(A\times B)\triangleq \Pr(X\in A,Y\in B)$ -- probability of a rectangle with
sides $A$ and $B$. In this paper we are interested in the following question: \textit{Among all sets of a given size, how
large/small can the probability of a rectangle be?} Previous works addressing similar questions relied on hypercontractive and reverse hypercontractive inequalities, as we describe below. Our main innovation is applying a new tool from~\cite{ps16} that is a refinement of the direct hypercontractive inequality to functions with sparse support.

A direct application of the hypercontractive inequality~\cite{EN66,Bonami1970,Beckner75,gross1975logarithmic,ODonnellBook} (see
Section~\ref{sec:hct} for more details) yields that for $A$ and $B$ of equal cardinalities, i.e. $|A|=|B| \eqdef \eta \cdot
2^n$, we have
\begin{align}
P_{XY}(A\times B)\leq \eta^{\frac{2}{1+\rho}},\label{eq:vHCTub}
\end{align}
whereas the reverse hypercontractive inequality of~\cite{CB82} was applied in~\cite{morss06} to obtain
\begin{align}
P_{XY}(A\times B)\geq \eta^{\frac{2}{1-\rho}}.\label{eq:vHCTlb}
\end{align}
Both bounds become quite tight for the regime of $\eta = \Theta(1)$, i.e. for very large sets of cardinalities
$|A|=|B|=\Theta(2^n)$. In particular,~\eqref{eq:vHCTub} is approximately attained by taking $A$ and $B$ as the
zero-centered Hamming balls containing all vectors with Hamming weight smaller than $\tfrac{n}{2}-s\sqrt{n}$, for large
$s$ independent of $n$, whereas~\eqref{eq:vHCTlb} is approximately attained by taking $A$ as such zero-centered ball and
$B$ as the same ball shifted such that its center is the all-ones vector. A special case of the construction
in~\cite{bogdanov2011extracting} also gives more constructions of sets approximately
attaining~\eqref{eq:vHCTub}: namely, 
for any $k \in \mathbb{Z_+}$ and all sufficiently large $n\ge n_0(k)$ they constructed sets $A=B$ of
 cardinality $2^{n-k}$ such that
 	\begin{equation}\label{eq:bm_2}
 		P_{XY}(A\times B) \ge \Omega_\rho(1/\sqrt{k}) 2^{-k \cdot{2\over 1+\rho}}\,,
 \end{equation}	
 thus showing that the estimate~\eqref{eq:vHCTub} is tight (up to a polylog factor $(\log {1\over \eta}))^{-\tfrac{1}{2}}$). 

In this paper we are interested in estimating the probability of rectangles for sets $A,B$ of much smaller cardinalities
(such as those frequently encountered in information and coding theories), namely $|A|=2^{n\alpha},|B|= 2^{n\beta}$ for $\alpha,\beta <
1$. Our original motivation stems from the bounds on the adder multiple access channel (MAC) zero-error capacity, obtained in~\cite{akkn18}. Sets $A,B\subset \{0,1\}^n$ are called a zero-error code for the adder MAC, if $|A+B|=|A|\cdot |B|$, where $A+B\subset\{0,1,2\}^n$ is the Minkowski sum (over the reals) of the sets $A$ and $B$. The problem of finding all pairs $(R_1,R_2)\in[0,1]^2$ for which there exist a zero-error code with sizes $|A|=2^{nR_1}$, $|B|=2^{nR_2}$ is a long standing open problem~\cite{Lindstrom69,Tilborg78,kl78,Weldon78,klwy83,ul98,os16,ap15}. One of the first results in the area, due to van Tilborg~\cite{Tilborg78}, states that if $A,B$ form a zero error code, then 
\begin{align}
W_d(A,B)&\triangleq ֿ\frac{1}{n}\log|\left\{(a,b)\in A\times B  \ : \ d(a,b)=nd\right\}|\label{eq:vt}\\
&\leq \frac{1}{n}\log{n\choose nd}+\min(d,1-d),\label{eq:vtBound}
\end{align}
for all $d\in\{0,\frac{1}{n},\ldots,1\}$. The basic idea in~\cite{akkn18} was to use~\eqref{eq:vtBound} for upper bounding 
\begin{align}
&P_{XY}(A\times B)\nonumber\\
&= 2^{-n}\left(\frac{1+\rho}{2}\right)^n\sum_{d=0}^1 2^{nW_d(A,B)} \left(\frac{1-\rho}{1+\rho}\right)^{nd}     
\end{align}
for any zero-error code $(A,B)$, and to contrast this with lower bounds on $P_{XY}(A\times B)$ for sets $|A|=2^{n R_1}$, $|B|=2^{n R_2}$ obtained in~\cite{morss06} (see Remark~\ref{remark:morss} below). A simple modification of this approach~\cite{akkn18} yielded the best known outer bounds on $(R_1,R_2)$ to date, and possibly, replacing the lower bound from~\cite{morss06} on  $P_{XY}(A\times B)$ with a sharper one, could yield stronger bounds on $(R_1,R_2)$. For instance, if our main conjecture, stated below, turns out to be true, repeating the arguments in~\cite{akkn18} with the improved bounds will yield that as $R_1$ approaches $1$ we must have that $R_2<0.4177$, improving upon $R_2<0.4228$ established in~\cite{akkn18}, which is the best known bound to date.

Our interest is in the greatest and smallest exponential decay rate of $P_{XY}(A\times B)$ among all possible sets $A,B$ of sizes $2^{n\alpha}$ and $2^{n\beta}$, respectively. To that end, for fixed $0< \alpha,\beta< 1$ we define
\begin{align}
\overline{E}(\alpha,\beta,\rho)&\triangleq- \limsup_{n\to\infty}\max_{\{A\},\{B\}} \frac{1}{n}\log P_{XY}(A\times B)\,,\\
\underline{E}(\alpha,\beta,\rho)&\triangleq-  \liminf_{n\to\infty} \min_{\{A\},\{B\}} \frac{1}{n}\log P_{XY}(A\times B)\,,
\end{align}
where $\max_{\{A\},\{B\}}$ and $\min_{\{A\},\{B\}}$ denote optimizations over the sequences of sets $A_n \subset
\{0,1\}^n$, $B_n \subset\{0,1\}^n$, $n\in \mathbb{Z}_+$ such that
	$$ |A_n| = 2^{n\alpha + o(n)}, \qquad |B_n| = 2^{n\beta + o(n)}\,.$$

Our \textbf{main conjecture} is that \textit{both $\overline{E}(\alpha,\beta,\rho)$ and $\underline{E}(\alpha,\beta,\rho)$ are optimized by concentric (resp., anti-concentric) Hamming balls}. In this work we show partial progress towards establishing this conjecture. Our conjecture is in line with the well-known facts that among all pairs of sets $A,B\subset\{0,1\}^n$ of given sizes, the maximal  distance $d_{\text{max}}(A,B)=\max_{a\in {A},b\in{B}}d(a,b)$ is minimized by concentric Hamming (quasi) balls~\cite{kleitman1966,ak77}, whereas the minimum distance $d_{\text{min}}(A,B)=\min_{a\in {A},b\in{B}}d(a,b)$ is maximized by anti-concentric Hamming (quasi) balls~\cite{ff81}.

\emph{Notation:} Logarithms are taken to base $2$ throughout, unless stated otherwise. We denote the Shannon entropy of
a random variable $V$ by $H(V)$. For a binary random variable $V\sim\Ber(p)$ we denote the entropy by $h(p)\triangleq
-p\log{p}-(1-p)\log{(1-p)}$ and its inverse restricted to $[0,1/2]$ by $h^{-1}(\cdot)$. 
For $0\leq p,q\leq 1$ we denote $p*q\triangleq p(1-q)+q(1-p)$. 

\vspace{2mm}

Our main results characterize $\overline{E}(\alpha,\alpha,\rho)$ in the low noise (large $\rho$) regime, and $\underline{E}(\alpha,\beta,\rho)$ in the high noise (small $\rho$) regime, as follows. 

\begin{theorem}\label{th:main1}
	As $\rho\to 1$ we have
	\begin{align}
	&\overline{E}(\alpha,\alpha,\rho)=(1-\alpha)\nonumber\\
	&+\frac{\frac{1}{2}-\sqrt{h^{-1}(\alpha)\left(1-h^{-1}(\alpha)\right)}}{\ln{2}}
	(1-\rho) + o(1-\rho).
	\end{align}
\label{thm:Emax}
\end{theorem}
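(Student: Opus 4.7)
I would prove matching upper and lower bounds on $\overline{E}(\alpha,\alpha,\rho)$. The upper bound would be obtained by exhibiting a pair of concentric Hamming balls and computing their joint probability via Sanov's theorem. The lower bound (converse) would apply the Polyanskiy-Samorodnitsky refinement of the hypercontractive inequality to arbitrary pairs of the given density.

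\textbf{Achievability.} Set $p=h^{-1}(\alpha)$ and $q=(1-\rho)/2$, and let $A=B$ be the Hamming ball of radius $\lfloor np\rfloor$ centered at the origin, which has cardinality $2^{n\alpha+o(n)}$. By Sanov's theorem, $-\frac{1}{n}\log P_{XY}(A\times A)\to D^*(q)$, where $D^*(q)\triangleq\min_Q D(Q\|P^{(q)})$ is the minimum relative entropy to the single-letter joint $P^{(q)}$ (satisfying $P^{(q)}(0,0)=P^{(q)}(1,1)=(1-q)/2$ and $P^{(q)}(0,1)=P^{(q)}(1,0)=q/2$), taken over distributions $Q$ on $\{0,1\}^2$ whose marginals are both $\mathrm{Ber}(p)$; by symmetry and monotonicity this dominates the broader minimum over marginals supported below $p$. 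Parametrizing by $r=Q(1,1)$, so that $Q(0,0)=1-2p+r$ and $Q(0,1)=Q(1,0)=p-r$, the first-order condition $\partial_r D=0$ reads $r(1-2p+r)/(p-r)^2 = ((1-q)/q)^2$, whence $p-r = q\sqrt{p(1-p)}+O(q^2)$ as $q\to 0$. Invoking the envelope theorem to differentiate past the implicit dependence through $r$, one obtains $dD^*/dq\vert_{q=0}=(1-2\sqrt{p(1-p)})/\ln 2$. Combined with $D^*(0)=1-h(p)=1-\alpha$ and the substitution $q=(1-\rho)/2$, this yields the first-order expansion claimed in the theorem, establishing $\overline{E}(\alpha,\alpha,\rho)\le\mathrm{RHS}$.

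\textbf{Converse.} I would apply the PS refinement of hypercontractivity to the indicators $\Ind_A,\Ind_B$ of arbitrary sets with $|A|=|B|=2^{n\alpha+o(n)}$. Unlike the classical bound $(\eta_A\eta_B)^{1/(1+\rho)}$, which in the $\rho\to 1$ expansion gives only the coefficient $(1-\alpha)/2$ in front of $(1-\rho)$ and is strictly weaker than the target whenever $p<1/2$, the PS refinement should produce a bound whose first-order Taylor coefficient matches the Hamming-ball value $(1/2-\sqrt{p(1-p)})/\ln 2$, thereby giving $\overline{E}(\alpha,\alpha,\rho)\ge\mathrm{RHS}$. If the PS form is most naturally stated for symmetric pairs, the asymmetric case $A\ne B$ can be reduced to it via Cauchy-Schwarz, $P_{XY,\rho}(A\times B)^2\le \eta_A\cdot P_{XY,\rho^2}(B\times B)$; the substitution $\rho\to\rho^2$ preserves the first-order coefficient because $(1-\rho^2)/2\sim 1-\rho$ as $\rho\to 1$.

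\textbf{Main obstacle.} The heart of the argument is the converse: the classical hypercontractive inequality is too weak, so the PS refinement is genuinely needed. Extracting the precise constant $(1/2-\sqrt{p(1-p)})/\ln 2$ from the PS bound---rather than just the weaker $(1-\alpha)/2$---is the main technical task and will require a careful asymptotic analysis of the PS inequality in the regime where the set density is exponentially small in $n$ and $\rho$ is close to $1$.
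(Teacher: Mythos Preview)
Your plan is correct and matches the paper's approach in structure: achievability via concentric Hamming balls/spheres, converse via the Polyanskiy--Samorodnitsky refinement of hypercontractivity. A few remarks on where your execution differs from the paper's.

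\textbf{Achievability.} The paper takes $A=B=\Sph_{nh^{-1}(\alpha)}$ and computes $P_{XY}(A\times B)$ by a direct combinatorial evaluation of the pair-weight enumerator $W_d(\Sph_i,\Sph_j)$, then optimizes over $d$; the optimizing distance is $d^*=\epsilon\sqrt{r(1-r)}+o(\epsilon)$ where $\epsilon=1-\rho$, and the expansion follows after some Taylor bookkeeping. Your route via Sanov and the envelope theorem is equivalent and arguably cleaner conceptually: your optimal off-diagonal mass $p-r=q\sqrt{p(1-p)}+O(q^2)$ is exactly the paper's $d^*/2$, and the envelope-theorem derivative $\partial_q D|_{s=s^*(q)}=(1-2s)/(1-q)-2s/q\to 1-2\sqrt{p(1-p)}$ recovers the same coefficient. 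One small caveat: at $q=0$ the optimizer sits on the boundary $s=0$, so the envelope identity should be read as a one-sided limit $q\to 0^+$ (which you implicitly do).

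\textbf{Converse.} The paper's Cauchy--Schwarz is the symmetric split $P_{XY}(A\times B)=(T_{\sqrt\rho}\Ind_B,T_{\sqrt\rho}\Ind_A)\le\|T_{\sqrt\rho}\Ind_A\|_2\|T_{\sqrt\rho}\Ind_B\|_2$, after which PS is applied to each factor separately with $q_0=2$; no reduction to $A=B$ is needed, since the PS inequality $\|T_{e^{-t}}\Ind_A\|_{q_0}\le\|\Ind_A\|_{q(t)}$ acts on a single function. Your alternative reduction $P_{XY,\rho}(A\times B)^2\le\eta_A\,P_{XY,\rho^2}(B\times B)$ also works and, as you note, the doubling $\rho\mapsto\rho^2$ costs nothing at first order. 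The substantive work you flag as the ``main obstacle''---extracting the exact coefficient $(1/2-\sqrt{p(1-p)})/\ln 2$ from PS rather than the classical $(1-\alpha)/2$---is carried out in the paper by an ODE/inverse-function-theorem argument (their Theorem~\ref{th:hcbx}) that produces $q(t)=2-C((1-\alpha)\ln 2)\,t+O(t^2)$, where $C(\ln 2(1-\alpha))=(2-4\sqrt{p(1-p)})/((1-\alpha)\ln 2)$; plugging this into $\|\Ind_A\|_{q(t)}=2^{-n(1-\alpha)/q(t)}$ gives the coefficient directly. You would need an analogous local analysis of the PS bound, which is exactly what you anticipate.
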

Theorem~\ref{thm:Emax} will follow from combining Proposition~\ref{prop:EmaxUB} and Proposition~\ref{prob:EmaxLB}, proved in Section~\ref{sec:spheres} and Section~\ref{sec:hct}, respectively.


\begin{theorem}\label{th:main2}
As $\rho\to 0$ we have
\begin{align}
&\underline{E}(\alpha,\beta,\rho)=(1-\alpha)+(1-\beta)\nonumber\\
&+\rho\log{e}\left(1-2h^{-1}(\alpha)*h^{-1}(\beta)\right)+o(\rho).
\end{align}
\label{thm:Emin}
\end{theorem}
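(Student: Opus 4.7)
My plan is to match upper and lower bounds on $-\tfrac{1}{n}\log P_{XY}(A\times B)$, with anti-concentric Hamming balls attaining the extremum. For the \emph{achievability} direction (lower bound on $\underline{E}$), I would take $A$ to be the Hamming ball around $0^n$ of radius $r_\alpha/n\to h^{-1}(\alpha)$, and $B$ the Hamming ball around $1^n$ of radius $r_\beta/n\to h^{-1}(\beta)$, so that $|A|=2^{n\alpha+o(n)}$ and $|B|=2^{n\beta+o(n)}$. Setting $b':=1^n\oplus b$, a uniform pair $(a,b)\in A\times B$ has $(a,b')$ approximately independent with coordinate marginals $\Ber(h^{-1}(\alpha))$ and $\Ber(h^{-1}(\beta))$, so $d(a,b)=n-d(a,b')$ concentrates at $n\bigl(1-h^{-1}(\alpha)*h^{-1}(\beta)\bigr)$. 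Substituting this into
\[
P_{XY}(A\times B)=4^{-n}\sum_{a\in A,\,b\in B}(1+\rho)^{n-d(a,b)}(1-\rho)^{d(a,b)}
\]
and expanding $\log(1\pm\rho)=\pm\rho\log e+O(\rho^2)$ produces the required lower bound on $-\tfrac{1}{n}\log P_{XY}(A\times B)$.

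For the matching \emph{converse}, the map $d\mapsto(1+\rho)^{n-d}(1-\rho)^d$ is exponential in $d$ (hence convex), so Jensen's inequality against the uniform measure on $A\times B$ gives
\[
P_{XY}(A\times B)\;\geq\;\frac{|A|\,|B|}{4^n}\,(1+\rho)^{n-\bar d}(1-\rho)^{\bar d},
\]
where $\bar d:=\bar d(A,B)=\frac{1}{|A||B|}\sum_{a,b}d(a,b)$ is the average Hamming distance between $A$ and $B$. Taking logarithms and expanding in $\rho$ yields $-\tfrac{1}{n}\log P_{XY}(A\times B)\leq(1-\alpha)+(1-\beta)+\rho\log e\,(2\bar d/n-1)+O(\rho^2)$, so the converse reduces to proving the combinatorial estimate
\[
\bar d(A,B)\;\leq\;n\bigl(1-h^{-1}(\alpha)*h^{-1}(\beta)\bigr)+o(n).
\]

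Setting $p_i:=\PP_{a\sim\Unif(A)}(a_i=1)$ and $q_i:=\PP_{b\sim\Unif(B)}(b_i=1)$, independence of $a$ and $b$ gives $\bar d(A,B)=\sum_i p_i*q_i$, while subadditivity of entropy together with $H(a)=n\alpha+o(n)$ yields $\sum_i h(p_i)\geq n\alpha+o(n)$, and similarly $\sum_i h(q_i)\geq n\beta+o(n)$. A one-line check over the four corners $\{h^{-1}(h(p)),1-h^{-1}(h(p))\}\times\{h^{-1}(h(q)),1-h^{-1}(h(q))\}$ establishes the pointwise bound $p*q\leq 1-h^{-1}(h(p))*h^{-1}(h(q))$, so writing $\phi(s,t):=h^{-1}(s)*h^{-1}(t)$ one gets $\bar d(A,B)\leq n-\sum_i\phi(h(p_i),h(q_i))$. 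The proof then finishes by applying Jensen together with the coordinate-wise monotonicity of $\phi$ to obtain $\frac{1}{n}\sum_i\phi(h(p_i),h(q_i))\geq\phi(\alpha,\beta)+o(1)$, provided $\phi$ is convex on $[0,1]^2$.

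I expect the main technical obstacle to be verifying convexity of $\phi$. Non-negativity of the diagonal Hessian entries follows immediately from $h''<0$, while the determinant condition reduces, after cancelling common positive factors, to a scalar inequality of the form $g(u)g(v)\geq 4$ on $(0,1/2]^2$ with $g(u)=(1-2u)/[u(1-u)\ln\tfrac{1-u}{u}]$. This in turn would follow from $g(u)\geq 2$, which can be established by analyzing $F(u):=1-2u-2u(1-u)\ln\tfrac{1-u}{u}$: direct differentiation gives $F'(u)=-2(1-2u)\ln\tfrac{1-u}{u}<0$ on $(0,1/2)$, so $F(u)\geq F(1/2)=0$. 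A cleaner but heavier alternative, if this verification proves unwieldy, would be to invoke Mrs.\ Gerber's lemma on the pair $(\Unif(A),\Unif(1^n\oplus B))$ together with the identity $d(a,b)=n-d(a,1^n\oplus b)$ to obtain the average-distance bound directly.
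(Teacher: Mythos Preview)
Your proposal is correct and follows essentially the same route as the paper: the converse (upper bound on $\underline{E}$) goes through Jensen on the exponential-in-$d$ weight to reduce to bounding the maximal average inter-set distance, which is then controlled via marginals, entropy subadditivity, and joint convexity of $\phi(s,t)=h^{-1}(s)*h^{-1}(t)$; the achievability (lower bound on $\underline{E}$) is by anti-concentric Hamming spheres/balls, where for small $\rho$ the dominant distance in the Laplace sum is the typical one, $n(1-h^{-1}(\alpha)*h^{-1}(\beta))$.

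The one genuine point of divergence is how convexity of $\phi$ is established. You compute the Hessian directly and reduce positive semidefiniteness to the scalar inequality $g(u):=\tfrac{1-2u}{u(1-u)\ln\frac{1-u}{u}}\geq 2$ on $(0,\tfrac12)$, which you verify via $F'(u)=-2(1-2u)\ln\tfrac{1-u}{u}<0$ and $F(\tfrac12)=0$; this is correct. The paper instead writes $a*b=\tfrac12\bigl(1-(1-2a)(1-2b)\bigr)$, applies Cauchy--Schwarz to $\mathbb{E}\bigl[(1-2h^{-1}(X))(1-2h^{-1}(Y))\bigr]$, and then invokes concavity of the \emph{scalar} map $t\mapsto(1-2h^{-1}(t))^2$. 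The paper's argument is shorter and decouples the two variables immediately; your Hessian route is more mechanical but self-contained and avoids the Cauchy--Schwarz trick. Your suggested alternative via Mrs.\ Gerber's lemma would also work but is not how the paper proceeds.
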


Theorem~\ref{thm:Emin} will follow from combining Proposition~\ref{prop:EminLB} and Proposition~\ref{prop:EminUB}, proved in Section~\ref{sec:spheres} and Section~\ref{sec:avdist}, respectively.

 In both cases, the optimal exponents are obtained (up to $o(\rho)$ and $o(1-\rho)$ terms) by taking $A$ and $B$ to be  Hamming spheres. In Section~\ref{sec:spheres} we compute $P_{XY}(A\times B)$ for Hamming spheres, and prove the corresponding upper bound for $\overline{E}(\alpha,\alpha,\rho)$ obtained by concentric spheres, and the lower bound on $\underline{E}(\alpha,\beta,\rho)$, obtained by spheres with opposite centers. In Section~\ref{sec:hct} we prove the lower bound on $\overline{E}(\alpha,\alpha,\rho)$.
What is interesting is that while~\eqref{eq:vHCTub} is shown via the classical hypercontractivity inequality~\cite{EN66,Bonami1970,Beckner75,gross1975logarithmic,ODonnellBook}, our result is shown by applying a recent improvement~\cite{ps16} of this inequality
for functions of small support (cf. Section~\ref{sec:hct}). In Section~\ref{sec:avdist} we prove the upper bound on
$\underline{E}(\alpha,\beta,\rho)$ by bounding the maximal average Hamming distance between members of $A$ and $B$,
subject to the cardinality constraint -- another combinatorial optimization problem of possible interest. 

\begin{remark}
After this work had been completed, we have learned from Naomi Kirshner and Alex Samorodnitsky about their concurrent
work~\cite{ks19} in which, among other things, they were able to prove that $\overline{E}(\alpha,\alpha,\rho)$ is attained by
concentric spheres for all $0<\rho<1$. Their result subsumes our Theorem~\ref{thm:Emax} and relies on a different
strengthening of a hypercontractive inequality.\footnote{In the notation of Section~\ref{sec:hct}, our work leverages the
inequality $\|T_\rho f\|_{q_0} \le \|f\|_q$ among all support-constrained functions $f$ (with the best possible $q$),
whereas the work~\cite{ks19} uses the inequality $\|T_\rho f\|_{q_0} \le e^{-n\lambda} \|f\|_{1+(q_0-1)\rho^2}$ with the
largest possible $\lambda$, which depends on the support size of $f$.} The problems of 
characterizing $\overline{E}(\alpha,\beta,\rho)$ for $\alpha \neq \beta$ and that of $\underline{E}(\alpha,\beta,\rho)$ remain open.
\end{remark}

\section{Bounds via Spheres}
\label{sec:spheres}

For $x=(x_1,\ldots,x_n)\in \{0,1\}^n$ denote the Hamming weight of
$x$ and the Hamming sphere centered at zero as
\begin{align} |x| &\eqdef |\{j: x_j=1\}|\\
	 \Sph_j &\eqdef \{x: |x|=j\} \,.
\end{align}
For the size of Hamming spheres we have~\cite[Exc. 5.8]{RG68}
\begin{equation}\label{eq:sphere}
	 |\Sph_{\lfloor d n\rfloor}| = {n \choose \lfloor d n\rfloor} = 2^{n h(d) - {1\over 2} \log n +
O(1)}\,, \qquad n\to\infty
\end{equation}
where the estimate is a consequence of Stirling's formula, $O(1)$ is uniform in $\delta$ on compact
subsets of $(0,1)$. 

Existential results (an upper bound on $\overline{E}$ and a lower bound on $\underline{E}$) follow from taking $A$ and $B$ as
Hamming spheres $\Sph_i$, $\Sph_j$ for a suitable $i,j$. Here we compute the probability of such spherical rectangles.

For any two sets $A,B\subset\{0,1\}^n$, we have
\begin{align}
P_{XY}(A\times B)&=\sum_{x\in A,y\in B}2^{-n\left(2-\log(1+\rho)-\frac{d(x,y)}{n}\log \left(\frac{1-\rho}{1+\rho} \right) \right)}\nonumber\\
&= 2^{-n(1+o(1))E(A,B,\rho)},
\end{align}
where
\begin{align}
E(A,B,\rho)\triangleq  \min_{0\leq d\leq 1}\bigg(&2-\log(1+\rho)\nonumber\\
&-W_d(A,B)-d\log \left(\frac{1-\rho}{1+\rho} \right) \bigg)\label{eq:expmin},
\end{align}
and $W_d(A,B)$ is as defined in~\eqref{eq:vt}.
Note that if $nd\notin\mathbb{N}$ we have that $W_d(A,B)=-\infty$, and therefore the minimization in~\eqref{eq:expmin} can indeed be performed on $[0,1]$ and need not be restricted to $d\in\{0,\frac{1}{n},\ldots,1\}$.

For two natural numbers $j\geq i$ and $d\in[0,1]$ such that $j-i+nd$ is even, we have that
\begin{align}
&W_d(\Sph_i,\Sph_j)\nonumber\\
&=\frac{1}{n}\log {{n}\choose{i}}{{i}\choose{\frac{1}{2}(j+i-nd)}}
{{n-i}\choose{\frac{1}{2}(j-i+nd)}}
\label{eq:Wd1}
\end{align}
for $j-i\leq nd\leq j+i$, and $W_d(\Sph_i,\Sph_j)=-\infty$ otherwise. Let $0<\alpha\leq\beta\leq 1$ and $d\in[0,1]$ be such that $i=nh^{-1}(\alpha)$ and $j=nh^{-1}(\beta)$ are integers and $j-i+nd$ is an even integer. Approximating ${n\choose k}=2^{n \left(h(k/n)+o(1)\right)}$ as in~\eqref{eq:sphere}, we have~\eqref{eq:Wder1},~\eqref{eq:Wder2} and~\eqref{eq:Wder3}  at the top of the next page,
\begin{figure*}[!t]
	\normalsize
	\begin{align}
	\frac{1}{n}\log {{n}\choose{i}}&= h(h^{-1}(\alpha))+o(1) ,\label{eq:Wder1}\\
	\frac{1}{n}\log {i\choose{\frac{1}{2}(j+i-nd)}}&= h^{-1}(\alpha) h\left(\frac{\frac{1}{2}h^{-1}(\alpha)+h^{-1}(\beta)-d}{h^{-1}(\alpha)}\right)+o(1),\label{eq:Wder2}\\
	\frac{1}{n}\log {{n-i}\choose{\frac{1}{2}(j-i+nd)}}&= \left(1-h^{-1}(\alpha)\right)h\left(\frac{\frac{1}{2}h^{-1}(\beta)-h^{-1}(\alpha)+d}{1-h^{-1}(\alpha)}\right).\label{eq:Wder3}
	\end{align}
	\hrulefill
	\vspace*{4pt}
\end{figure*}
and it therefore follows from~\eqref{eq:Wd1} that
\begin{align}
W_d(\Sph_i,\Sph_j)&=w_d(\alpha,\beta)+o(1),
\end{align}
where
\begin{align}
w_d(\alpha,\beta)&\triangleq \alpha+h^{-1}(\alpha)h\left(\frac{1}{2}+\frac{h^{-1}(\beta)-d}{2h^{-1}(\alpha)}\right)\nonumber\\
&+\left(1-h^{-1}(\alpha)\right)h\left(\frac{1}{2}+\frac{d-(1-h^{-1}(\beta))}{2(1-h^{-1}(\alpha))}\right)
\label{eq:Wd}
\end{align}
for $h^{-1}(\beta)-h^{-1}(\alpha)\leq d\leq h^{-1}(\beta)+h^{-1}(\alpha)$, and $w_d(\alpha,\beta)=-\infty$ otherwise. Since the values of $d\in[0,1]$ for which $j-i+nd$ is an even integer become arbitrarily dense as $n$ grows, by continuity of $d\mapsto w_d(\alpha,\beta)$, we have that 
\begin{align}
E(\Sph_{n h^{-1}(\alpha)},&\Sph_{n h^{-1}(\beta)},\rho) = \min_{0\leq d\leq 1}\bigg(2-\log(1+\rho)\nonumber\\
&-w_d(\alpha,\beta)-d\log \left(\frac{1-\rho}{1+\rho} \bigg) \right)+o(1)\label{eq:spheapprox}.
\end{align}

\begin{proposition}
For large $\rho$ we have
\begin{align}
&\overline{E}(\alpha,\alpha,\rho)\leq (1-\alpha)\nonumber\\
&+\frac{\frac{1}{2}-\sqrt{h^{-1}(\alpha)\left(1-h^{-1}(\alpha)\right)}}{\ln{2}}(1-\rho)+o(1-\rho).
\end{align}
\label{prop:EmaxUB}
\end{proposition}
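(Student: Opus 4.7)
The plan is to apply \eqref{eq:spheapprox} with the existential choice $A = B = \Sph_{\lfloor n h^{-1}(\alpha)\rfloor}$. By the Stirling estimate \eqref{eq:sphere} this sphere has cardinality $2^{n\alpha + o(n)}$, which meets the constraint in the definition of $\overline{E}(\alpha,\alpha,\rho)$, so it suffices to bound the right-hand side of \eqref{eq:spheapprox} with $\beta = \alpha$. Writing $p \triangleq h^{-1}(\alpha)$ and using $h(1-x) = h(x)$, the quantity $w_d$ in \eqref{eq:Wd} specializes to
\[
w_d(\alpha,\alpha) = \alpha + p\, h\!\left(\tfrac{d}{2p}\right) + (1-p)\, h\!\left(\tfrac{d}{2(1-p)}\right), \qquad 0\le d \le 2\min(p,1-p),
\]
and the problem reduces to upper-bounding $\min_d F(d)$, where $F(d) \triangleq 2 - \log(1+\rho) - w_d(\alpha,\alpha) + d\log\tfrac{1+\rho}{1-\rho}$, as $\rho \to 1$.

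The key step is identifying the scale of the minimizer. Differentiating and using $h'(x) = \log\tfrac{1-x}{x}$, the stationarity condition $F'(d) = 0$ collapses to
\[
\frac{\sqrt{(2p-d)(2(1-p)-d)}}{d} = \frac{1+\rho}{1-\rho}.
\]
As $\rho \to 1$ the right-hand side diverges, so $d_* \to 0$; writing $d_* = c(1-\rho) + O((1-\rho)^2)$ and matching leading orders forces $c = \sqrt{p(1-p)}$. Convexity of $F$ near $d_*$ (easily checked from $F''(d_*) > 0$) confirms this stationary point is a minimum.

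The last step is to substitute $d_* = \sqrt{p(1-p)}\,\eps$ with $\eps \triangleq 1-\rho$ into $F$ and expand to first order in $\eps$, using the small-$x$ expansion $h(x) = -x\log x + \tfrac{x}{\ln 2} + O(x^2)$. This gives $2 - \log(1+\rho) = 1 + \tfrac{\eps}{2\ln 2} + O(\eps^2)$; the two $h$-terms in $w_{d_*}$ aggregate as $-d_* \log\tfrac{d_*}{2\sqrt{p(1-p)}} + \tfrac{d_*}{\ln 2} + O(\eps^2)$; and $d_*\log\tfrac{1+\rho}{1-\rho} = d_* \log(2/\eps) + O(\eps^2)$. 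The crux is the cancellation of the $\log(1/\eps)$ contributions: the combined logarithmic factor becomes $d_* \log\tfrac{d_*}{\eps\sqrt{p(1-p)}}$, which vanishes identically for the chosen $d_*$. What remains is
\[
F(d_*) = (1-\alpha) + \frac{1/2 - \sqrt{p(1-p)}}{\ln 2}\,\eps + o(\eps),
\]
which is exactly the bound claimed in the proposition. The only real delicacy is tracking this cancellation carefully: it is precisely what pins down the constant $c = \sqrt{p(1-p)}$ and the leading coefficient $\tfrac{1/2 - \sqrt{h^{-1}(\alpha)(1-h^{-1}(\alpha))}}{\ln 2}$.
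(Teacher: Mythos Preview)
Your proof is correct and follows the same approach as the paper: take $A=B=\Sph_{\lfloor nh^{-1}(\alpha)\rfloor}$, locate the minimizer $d_*=\sqrt{p(1-p)}(1-\rho)+o(1-\rho)$ via the first-order condition, and expand to first order in $1-\rho$. Your closed-form stationarity equation $\sqrt{(2p-d)(2(1-p)-d)}/d=(1+\rho)/(1-\rho)$ and the explicit isolation of the $\log(1/\eps)$ cancellation are slightly tidier than the paper's line-by-line expansion, but the method is identical.
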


\begin{proof}
Let $0<\alpha\leq 1$. We establish the claim by evaluating $P_{XY}(A\times B)$ for $A=B=\Sph_{n h^{-1}(\alpha)}$ and $\rho=1-\epsilon$. By~\eqref{eq:spheapprox}, it holds that
\begin{align}
&E\bigg(\Sph_{n h^{-1}(\alpha)},\Sph_{n h^{-1}(\alpha)},1-\epsilon\bigg)\nonumber\\
&=  \min_{d}\left(2-\log(2-\epsilon)-w_d(\alpha,\alpha)+d\log \left(\frac{2-\epsilon}{\epsilon} \right) \right)\nonumber\\
&+o(1)\nonumber\\
&=1+\frac{\epsilon}{2}\log(e)-\max_{d}\left(w_d(\alpha,\alpha)-d\log \left(\frac{2}{\epsilon} \right)+d \frac{\epsilon}{2}\log(e)\right)\nonumber\\
&  +o(\epsilon)+o(1).\label{eq:ExptSameSphere}
\end{align}
Denoting $r=r_{\alpha}=h^{-1}(\alpha)$, we have that 
\begin{align}
w_d(\alpha,\alpha)=h(r)+r\cdot h\left(\frac{d/2}{r}\right)+(1-r)\cdot h\left(\frac{d/2}{1-r}\right).\label{eq:samespheres}
\end{align}
The function $d\mapsto w_d(\alpha,\alpha)-d\log \left(\frac{2}{\epsilon} \right)+d \frac{\epsilon}{2}\log(e)$ is concave and its derivative
\begin{align}
\frac{1}{2}\log\left(\frac{1-\frac{d/2}{r}}{\frac{d/2}{r}}\right)&+\frac{1}{2}\log\left(\frac{1-\frac{d/2}{1-r}}{\frac{d/2}{1-r}}\right)\nonumber\\
&-\log\left(\frac{2}{\epsilon}\right)+ \frac{\epsilon}{2}\log(e)
\end{align}
equals zero at $d^*=\epsilon\sqrt{r(1-r)}+o(\epsilon)$. Thus, the optimizing $d$ in~\eqref{eq:ExptSameSphere} is $d^*=\epsilon\sqrt{r(1-r)}+o(\epsilon)$, and therefore
\begin{align}
&E\bigg(\Sph_{n h^{-1}(\alpha)},\Sph_{n h^{-1}(\alpha)},1-\epsilon\bigg)=1-h(r)\nonumber\\
&+\frac{\epsilon}{2}\log(e)+\epsilon\sqrt{r(1-r)}+\epsilon\log\left(\frac{1}{\epsilon}\right)\sqrt{r(1-r)}\nonumber\\
&-\left[r\cdot h\left(\sqrt{\frac{1-r}{r}}\frac{\epsilon}{2}\right)+(1-r)\cdot h\left(\sqrt{\frac{r}{1-r}}\frac{\epsilon}{2}\right)\right]\nonumber\\
&+o(\epsilon)+o(1)\label{eq:longEnt}
\end{align}
We approximate the term in the square brackets in equations~\eqref{eq:SQder1},~\eqref{eq:SQder2} and~\eqref{eq:longEntSum} at the bottom of the page.
\begin{figure*}[!b]
	\normalsize
		\hrulefill
	\begin{align}
	r\cdot h\left(\sqrt{\frac{1-r}{r}}\frac{\epsilon}{2}\right)&=-\sqrt{r(1-r)}\frac{\epsilon}{2}\log\left(\sqrt{\frac{1-r}{r}}\frac{\epsilon}{2}\right)-r\left(1-\sqrt{\frac{1-r}{r}}\frac{\epsilon}{2}\right)\log\left(1-\sqrt{\frac{1-r}{r}}\frac{\epsilon}{2}\right)\nonumber\\
	&=-\sqrt{r(1-r)}\frac{\epsilon}{2}\log\left(\sqrt{\frac{1-r}{r}}\frac{\epsilon}{2}\right)+\frac{\epsilon}{2}\sqrt{r(1-r)}\log{(e)}+o(\epsilon), ,\label{eq:SQder1}\\
	(1-r)\cdot h\left(\sqrt{\frac{r}{1-r}}\frac{\epsilon}{2}\right)&=-\sqrt{r(1-r)}\frac{\epsilon}{2}\log\left(\sqrt{\frac{r}{1-r}}\frac{\epsilon}{2}\right)-(1-r)\left(1-\sqrt{\frac{r}{1-r}}\frac{\epsilon}{2}\right)\log\left(1-\sqrt{\frac{r}{1-r}}\frac{\epsilon}{2}\right)\nonumber\\
	&=-\sqrt{r(1-r)}\frac{\epsilon}{2}\log\left(\sqrt{\frac{r}{1-r}}\frac{\epsilon}{2}\right)+\frac{\epsilon}{2}\sqrt{r(1-r)}\log{(e)}+o(\epsilon),\label{eq:SQder2}\\
	r\cdot h\left(\sqrt{\frac{1-r}{r}}\frac{\epsilon}{2}\right)&+(1-r)\cdot h\left(\sqrt{\frac{r}{1-r}}\frac{\epsilon}{2}\right)=-\sqrt{r(1-r)}\epsilon\log\left(\frac{\epsilon}{2}\right)+\epsilon\sqrt{r(1-r)}\log{(e)}+o(\epsilon)\nonumber\\
	&=\sqrt{r(1-r)}\epsilon\log\left(\frac{1}{\epsilon}\right)+\epsilon\sqrt{r(1-r)}+\epsilon\sqrt{r(1-r)}\log{(e)}+o(\epsilon)
	.\label{eq:longEntSum}
	\end{align}
	\vspace*{4pt}
\end{figure*}
Substituting~\eqref{eq:longEntSum} into~\eqref{eq:longEnt}, we obtain
\begin{align}
&E\bigg(\Sph_{n h^{-1}(\alpha)},\Sph_{n h^{-1}(\alpha)},1-\epsilon\bigg)=1-h(r)\nonumber\\
&+\left(\frac{1}{2}-\sqrt{r(1-r)}\right)\epsilon\log{(e)}+o(\epsilon)+o(1).
\end{align}
The claim now follows by definition of $\overline{E}(\alpha,\alpha,\rho)$.
\end{proof}

\begin{proposition}
For small $\rho$ we have that
\begin{align}
\underline{E}(\alpha,\beta,\rho)&\geq (1-\alpha)+(1-\beta)\nonumber\\
&+\rho\log{e}\left(1-2h^{-1}(\alpha)*h^{-1}(\beta)\right)+o(\rho).
\end{align}
\label{prop:EminLB}
\end{proposition}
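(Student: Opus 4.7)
The natural candidate is a pair of opposite Hamming spheres. Writing $r=h^{-1}(\alpha)$ and $s=h^{-1}(\beta)$, I would take $A=\Sph_{\lfloor nr\rfloor}$ (centered at $\mathbf{0}$) and $B=\mathbf{1}\oplus\Sph_{\lfloor ns\rfloor}=\Sph_{n-\lfloor ns\rfloor}$ (centered at $\mathbf{1}$), so that by~\eqref{eq:sphere} we have $|A|=2^{n\alpha+o(n)}$ and $|B|=2^{n\beta+o(n)}$. Rather than redo the combinatorics of Section~\ref{sec:spheres} for opposite spheres, I would apply the bit-flip bijection $Y\mapsto Y'=Y\oplus\mathbf{1}$, which takes $B$ to the concentric sphere $\Sph_{\lfloor ns\rfloor}$ and converts the $\rho$-correlated pair $(X,Y)$ into a $(-\rho)$-correlated pair $(X,Y')$. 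Since the derivation leading to~\eqref{eq:spheapprox} uses only that the correlation parameter lies in $(-1,1)$, substituting $-\rho$ for $\rho$ yields
\[
E(A,B,\rho)=\min_{0\leq d\leq 1}\Bigl[\,2-\log(1-\rho)-w_d(\alpha,\beta)-d\log\!\Bigl(\tfrac{1+\rho}{1-\rho}\Bigr)\Bigr]+o(1).
\]

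Expanding with $\log(1-\rho)=-\rho\log e+O(\rho^2)$ and $\log\tfrac{1+\rho}{1-\rho}=2\rho\log e+O(\rho^3)$ transforms this into
\[
E(A,B,\rho)=2+\rho\log e-\max_{d}\bigl[w_d(\alpha,\beta)+2d\rho\log e\bigr]+o(\rho)+o(1),
\]
so the remaining task is to evaluate the inner maximum to first order in $\rho$. Differentiating~\eqref{eq:Wd} in $d$ and using that $h'(x)=\log\tfrac{1-x}{x}$ is strictly monotone on $(0,1)$, the unperturbed ($\rho=0$) first-order condition collapses to $\tfrac{s-d}{2r}=\tfrac{d-(1-s)}{2(1-r)}$, whose unique interior solution is $d^{*}=r(1-s)+s(1-r)=r*s$. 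A short substitution shows that both arguments of $h$ in~\eqref{eq:Wd} reduce to $s$ at $d=d^{*}$, so $w_{d^{*}}(\alpha,\beta)=\alpha+r\,h(s)+(1-r)\,h(s)=\alpha+\beta$, which already recovers the trivial leading term $(1-\alpha)+(1-\beta)$.

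Finally, $w_d(\alpha,\beta)$ is strictly concave in $d$ (its second derivative is a strictly negative combination of values of $h''$), and $d^{*}$ lies in the interior of the admissible range $[|s-r|,\,\min(r+s,\,2-r-s)]$ for any $r,s\in(0,1)$. Hence the envelope theorem (equivalently, a second-order Taylor expansion of $w_d+2d\rho\log e$ around $d^{*}$) yields
\[
\max_d\bigl[w_d(\alpha,\beta)+2d\rho\log e\bigr]=\alpha+\beta+2(r*s)\rho\log e+O(\rho^2).
\]
Substituting back gives $E(A,B,\rho)=(1-\alpha)+(1-\beta)+\rho\log e\,(1-2\,r*s)+o(\rho)+o(1)$, and the definition of $\underline{E}$ then delivers the claim. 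The only mild obstacle is to keep the two asymptotic regimes $n\to\infty$ and $\rho\to 0$ properly separated so that the $o(\rho)$ and $o(1)$ error terms do not interfere; this is handled by first letting $n\to\infty$ at fixed $\rho$ to validate~\eqref{eq:spheapprox} and only then expanding in $\rho$.
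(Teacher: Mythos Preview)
Your proposal is correct and follows essentially the same route as the paper: choose anti-concentric spheres, reduce to the concentric formula~\eqref{eq:spheapprox}, expand for small $\rho$, locate the unperturbed maximizer $d^{*}=h^{-1}(\alpha)*h^{-1}(\beta)$ with value $\alpha+\beta$, and extract the first-order term via strict concavity/Taylor. The only cosmetic difference is that the paper uses the identity $W_d(A,1^n+B)=W_{1-d}(A,B)$ to pass to concentric spheres, whereas you use the equivalent observation that the bit-flip $Y\mapsto Y\oplus\mathbf{1}$ converts $\rho$ into $-\rho$; both lead to exactly the same expression $2-\log(1-\rho)-\max_d\bigl[w_d(\alpha,\beta)+2d\rho\log e\bigr]+o(\rho)+o(1)$.
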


\begin{proof}
We establish the claim by evaluating $P_{XY}(A\times B)$ for $A=\Sph_{n h^{-1}(\alpha)}$ and $B=1^n+\Sph_{n h^{-1}(\beta)}$, i.e., a zero-centered Hamming sphere and a Hamming sphere centered around the all-ones vector $1^n$. First, note that for any $A,B\subset\{0,1\}^n$ it holds that
\begin{align}
W_d(A,1^n+B)=W_{1-d}(A,B).
\end{align}
Thus, applying~\eqref{eq:spheapprox}, we see that for $0<\alpha\leq\beta\leq 1$ it holds that
\begin{align}
&E\left(\Sph_{n h^{-1}(\alpha)},1^n+\Sph_{n h^{-1}(\beta)},\rho\right)\nonumber\\
&=\min_{0\leq d\leq 1}\bigg(2-\log(1+\rho)-w_d(\alpha,\beta)\nonumber\\
&~~~~~~~~~~~~~-(1-d)\log \left(\frac{1-\rho}{1+\rho} \right) \bigg)+o(1)\nonumber\\
&=2-\log(1-\rho)-\max_{0\leq d\leq 1}\left(w_d(\alpha,\beta)-d\log \left(\frac{1-\rho}{1+\rho} \right) \right)\nonumber\\
&+o(1)\label{eq:expmin2}
\end{align}
Let us consider the case of $\rho\ll 1$. In this case, we have that $\log(1+\rho)=\rho\log{e}+o(\rho)$, so that~\eqref{eq:expmin2} reads
\begin{align}
&E\left(\Sph_{n h^{-1}(\alpha)},1^n+\Sph_{n h^{-1}(\beta)},\rho\right)=2+\rho\log{e}\nonumber\\
&- \max_{d}\left(w_d(\alpha,\beta)+2d\rho\log{e}\right)+o(\rho)+o(1).\label{eq:expmin3}
\end{align} 
The function $d\mapsto w_d(\alpha,\beta)\triangleq g(d)$ is strictly concave, and it is straightforward to verify that $g'(h^{-1}(\alpha)*h^{-1}(\beta))=0$ and that $g(h^{-1}(\alpha)*h^{-1}(\beta))=\alpha+\beta$. Denoting $c=2g''(h^{-1}(\alpha)*h^{-1}(\beta))<0$ and setting $\delta=d-h^{-1}(\alpha)*h^{-1}(\beta)$, we therefore have
\begin{align}
g(d)=\alpha+\beta+c \delta^2+o(\delta^2).
\end{align}
Consequently,
\begin{align}
&w_d(\alpha,\beta)+2d\rho\log{e}=g(d)+2d\rho\log{e}\nonumber\\
&=\alpha+\beta+c \delta^2+2(h^{-1}(\alpha)*h^{-1}(\beta)+\delta)\rho\log{e}+o(\delta^2)\nonumber\\
&=\alpha+\beta+\rho\log{e}\cdot 2h^{-1}(\alpha)*h^{-1}(\beta)\nonumber\\
&+\delta\left(2\rho\log{e}+c\delta+o(\delta)\right)\nonumber\\
&\leq \alpha+\beta+\rho\log{e}\cdot 2h^{-1}(\alpha)*h^{-1}(\beta)+o(\rho),\label{eq:wdub}
\end{align}
where the last inequality follows since $c<0$. Substituting~\eqref{eq:wdub} into~\eqref{eq:expmin3} we obtain
\begin{align}
&E\left(\Sph_{n h^{-1}(\alpha)},1^n+\Sph_{n h^{-1}(\beta),\beta}\right)\geq (1-\alpha)+(1-\beta)\nonumber\\
&+\rho\log{e}\left(1-2h^{-1}(\alpha)*h^{-1}(\beta)\right)+o(\rho)+o(1).
\end{align} 
The claim now follows by definition of $\underline{E}(\alpha,\beta,\rho)$.
\end{proof}

\section{Lower Bound on $\overline{E}(\alpha,\alpha,\rho)$}
\label{sec:hct}

For a function $f:\{0,1\}^n\to\RR^+$ and $p\geq 1$ we define $\|f\|_p=\mathbb{E}^{1/p}[|f(X)|^p]$. For a set $A\subset \{0,1\}^n$
denote
	$$\Ind_A(x) \eqdef \begin{cases} 0, & x\not\in A\\
					 1, & x \in A
			\end{cases} $$
We have that
\begin{align}
P_{XY}(A\times B)&=\mathbb{E}\left[\Ind_A(X)\Ind_B(Y)\right]\nonumber\\
&=\mathbb{E}\left[\Ind_B(Y)\mathbb{E}\left[\Ind_A(X)|Y\right]\right]\nonumber\\
&=\mathbb{E}\left[\Ind_B(Y)(T_{\rho}\Ind_A)(Y)\right]\,,
\end{align}
where
\begin{align}
(T_\rho f)(y)\triangleq\mathbb{E}[f(X)|Y=y].
\end{align}
Denoting the inner-product $(f,g) = \EE[f(Y) g(Y)]$ and noticing that $T_\rho$ is self-adjoint and satisfies
the semigroup property $T_{\rho_1}T_{\rho_2}=T_{\rho_1\rho_2}$ (for $0<\rho_1,\rho_2<1$), we obtain
\begin{align} P_{XY}(A\times B)&= (\Ind_B, T_\rho \Ind_A)\nonumber\\
		   &=(T_{\rho_1} \Ind_B, T_{\rho_2} \Ind_A) \qquad \forall \rho_1\rho_2 = \rho\label{eq:iterexpt}\\
		   &\le \|T_{\rho_1} \Ind_B\|_{2} \|T_{\rho_2} \Ind_A\|_{2}\label{eq:holder}\,,
\end{align}
where the last step is Cauchy-Schwarz inequality.

The next step is to use the hypercontractivity inequality to upper bound $\|T_\rho f\|_p$. Denote the support size of
$f$ by $\|f\|_0$. Since $\|f\|_0\ll 2^n$, we will use an improved hypercontractivity inequality from~\cite{ps16}, that takes $\|f\|_0$
into account. The following result is a key ingredient:
\begin{theorem}[Theorem 7 in \cite{ps16}]\label{th:hcb} Fix $1<p_0 < \infty$ and $0\le \lambda_0 \le(1-p_0^{-1})\ln2$. 
For any $f:\{0,1\}^n \to \mreals_+$ with $\|f\|_{p_0} \ge e^{n\lambda_0}
\|f\|_1$ we have
\begin{equation}\label{eq:hcb_1}
	\|T_{e^{-t}} f\|_{p(t)} \le \|f\|_{p_0}\,, \qquad p(t) = 1+e^{u(t)}\,,
\end{equation}
where $u(t)$ is the unique solution on $[0,\infty)$ of the following ODE with initial condition $u(0)=\ln(p_0-1)$
\begin{subequations}\label{eq:hcb_0} 
\begin{align}
	u'(t) &= C\left(\lambda_0(1+e^{-u(t)})\right)\\ 
	C(\ln 2(1 - h(y))) &= {2-4\sqrt{y(1-y)} \over \ln 2(1
- h(y))}.
\end{align}
\end{subequations}
Furthermore, the function $C:[0,\ln2]\to[2, 2/\ln2]$ is a smooth, convex and strictly increasing bijection.
\end{theorem}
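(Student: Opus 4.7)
The plan is to use the standard differential method for hypercontractivity, replacing the classical log-Sobolev inequality by a support-aware strengthening. Set $g_t = T_{e^{-t}} f$ and $F(t) = \log\|g_t\|_{p(t)}$. Writing out $dF/dt$ and using that $-\partial_t g_t = \mathcal{L} g_t$ for the discrete Ornstein-Uhlenbeck generator $\mathcal{L}$ on $\{0,1\}^n$ reduces the desired monotonicity $F'(t)\le 0$ to an entropy-energy inequality of the form
\begin{equation}
p'(t)\,\mathrm{Ent}_{p(t)}(g_t) \le 2\,p(t)(p(t)-1)\,\mathcal{E}\bigl(g_t^{p(t)/2}\bigr),
\end{equation}
where $\mathrm{Ent}_p(h) = \EE[h^p \log h^p] - \|h\|_p^p \log\|h\|_p^p$ and $\mathcal{E}$ is the Dirichlet form of the noise semigroup. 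Classical Gross/Bonami-Beckner yields this with $p'(t)=2$ and no condition on $g_t$, which is $u'(t)=2$. The task is to extract a larger admissible $p'(t)$ when $g_t$ is sparse.

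The first step is to track the sparsity hypothesis along the flow. The assumption $\|f\|_{p_0}\ge e^{n\lambda_0}\|f\|_1$ should propagate to $\|g_t\|_{p(t)}\ge e^{n\lambda(t)}\|g_t\|_1$ for some effective sparsity $\lambda(t)$ that decays in a controlled way as $p(t)$ grows. Using three-line interpolation of $L^p$-norms together with Markov contractivity $\|g_t\|_1 = \|f\|_1$ and the change of variable $u(t)=\ln(p(t)-1)$, one identifies $\lambda(t) = \lambda_0(1+e^{-u(t)})$ up to normalization, which is exactly the argument fed to $C$ in the ODE.

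The second and principal step is a support-constrained log-Sobolev inequality: for any $h:\{0,1\}^n\to\RR^+$ with $\|h\|_p\ge e^{n\lambda}\|h\|_1$, one should show
\begin{equation}
\mathrm{Ent}_p(h) \le \frac{2\,p(p-1)}{C(\lambda)}\,\mathcal{E}\bigl(h^{p/2}\bigr),
\end{equation}
with $C$ as stated. The formula $C(\ln 2\,(1-h(y))) = (2-4\sqrt{y(1-y)})/(\ln 2\,(1-h(y)))$ strongly suggests that the extremizer is essentially the indicator of a Hamming ball of size $2^{nh(y)}$, since $2-4\sqrt{y(1-y)}$ is exactly the normalized isoperimetric ratio that arises for such balls. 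A natural strategy is therefore to (i) compress to monotone functions of the Hamming weight via a Harper-type rearrangement, and (ii) verify the tight inequality on Hamming balls by a direct Krawtchouk/spectral computation for the noise operator.

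The main obstacle is precisely this sharp support-constrained log-Sobolev inequality: it cannot follow from a classical two-point tensorization, which would only recover the dense constant $2/\ln 2 = C(\ln 2)$. The genuine improvement requires a dimension-aware argument that balances how entropy and Dirichlet energy respond differently to the semigroup-induced spreading of support. Once this inequality is in hand, smoothness, convexity and strict monotonicity of $C:[0,\ln 2]\to[2,2/\ln 2]$ are routine calculus checks in $y$, and Picard-Lindel\"of yields a unique global solution on $[0,\infty)$ of the ODE with $u(0)=\ln(p_0-1)$, closing the argument.
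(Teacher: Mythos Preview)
The paper does not prove this theorem at all: it is quoted verbatim as ``Theorem~7 in~\cite{ps16}'' and used as a black box to derive Theorem~\ref{th:hcbx}. There is therefore no proof in the paper to compare your proposal against.

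As to your sketch itself: the differential/semigroup framework you set up is the natural one, and you correctly isolate the crux as a support-constrained log-Sobolev inequality with the sharp constant $C(\cdot)$. But two points deserve caution. First, your propagation step is loosely argued: the semigroup spreads support, so the ratio $\|g_t\|_{p(t)}/\|g_t\|_1$ can only be controlled from above via $\|g_t\|_1=\|f\|_1$ and the (yet unproven) bound $\|g_t\|_{p(t)}\le\|f\|_{p_0}$, whereas the strengthened LSI needs a \emph{lower} bound on this ratio. The usual fix is a borderline/Gr\"onwall argument (show $F'(t)\le0$ at any first time where $F(t)=F(0)$, where the ratio equals $e^{n\lambda_0}$ exactly), and then your identification $\lambda_0(1+e^{-u(t)})=\lambda_0\,p(t)/(p(t)-1)$ is indeed the right parameter fed to $C$. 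Second, and more importantly, you explicitly flag the sharp support-constrained LSI as ``the main obstacle'' and do not prove it; the Harper-rearrangement-plus-Krawtchouk outline is plausible but this is exactly the substantive content of~\cite{ps16}, so your proposal is really a reduction to that paper rather than an independent proof.
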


From this result we derive the following implication for indicator functions.
\begin{theorem}\label{th:hcbx} Fix $0<\alpha<1$ and $1 < q_0 < \infty$.  Then there exists a function $q=q(t)$ defined on an interval
$t\in[0,\epsilon)$ for some $\epsilon>0$ such that for all sets $A\subset \{0,1\}^n$ with $|A|\le 2^{n\alpha}$ we have
\begin{equation}\label{eq:hx_a}
		\|T_{e^{-t}} 1_A\|_{q_0} \le \|1_A\|_{q(t)} \qquad \forall t \in [0,\epsilon)\,.
\end{equation}	
The function $q(t)$ satisfies  
\begin{equation}\label{eq:hx_b}
		q(t) = q_0 - (q_0-1) C((1-\alpha) \ln 2)t + O(t^2) \qquad \mbox{as~} t\to 0\,.
\end{equation}	
\end{theorem}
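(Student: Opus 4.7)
The plan is to apply Theorem~\ref{th:hcb} to the indicator $f=\Ind_A$ and then invert the resulting norm inequality to place $q_0$ on the left-hand side. For $f=\Ind_A$ with $\eta\eqdef|A|/2^n$, direct computation gives $\|\Ind_A\|_p = \eta^{1/p}$, so for any $p_0>1$,
$$\|\Ind_A\|_{p_0}/\|\Ind_A\|_1 = \eta^{-(p_0-1)/p_0} \ge 2^{n(1-\alpha)(1-1/p_0)}$$
whenever $|A|\le 2^{n\alpha}$. Setting
$$\lambda_0(p_0)\eqdef(1-\alpha)(1-p_0^{-1})\ln 2,$$
the normalization hypothesis $\|\Ind_A\|_{p_0}\ge e^{n\lambda_0}\|\Ind_A\|_1$ of Theorem~\ref{th:hcb} is satisfied, and the admissibility condition $\lambda_0(p_0)\le(1-p_0^{-1})\ln 2$ holds automatically since $1-\alpha\le 1$. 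Thus for every $p_0>1$ and every $t\ge 0$,
$$\|T_{e^{-t}}\Ind_A\|_{p(t;p_0)} \le \|\Ind_A\|_{p_0},$$
where $p(\cdot;p_0)$ solves the ODE in~\eqref{eq:hcb_0} with $p(0;p_0)=p_0$ and whose right-hand side depends on $p_0$ through $\lambda_0(p_0)$.

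To deduce~\eqref{eq:hx_a}, I invert the role of the two exponents: for each small $t\ge 0$ and the given $q_0>1$, I choose $p_0=p_0(t)$ so that $p(t;p_0(t))=q_0$ and define $q(t)\eqdef p_0(t)$. Since $p(0;p_0)=p_0$, the partial derivative $\partial_{p_0}p(0;p_0)=1\neq 0$, so by the implicit function theorem $q(t)$ exists and is smooth on some interval $[0,\epsilon)$ with $q(0)=q_0$. Substituting $p_0=q(t)$ into the hypercontractive bound above yields $\|T_{e^{-t}}\Ind_A\|_{q_0}\le\|\Ind_A\|_{q(t)}$, which is~\eqref{eq:hx_a}.

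For the asymptotic expansion~\eqref{eq:hx_b}, I differentiate the identity $p(t;q(t))\equiv q_0$ at $t=0$ to obtain $q'(0) = -\partial_t p(0;q_0)$. From the ODE together with $p=1+e^u$, one has $\partial_t p = (p-1)\,C\bigl(\lambda_0(p_0)\cdot p/(p-1)\bigr)$. Evaluating at $t=0$ with $p_0=q_0$ and using the choice of $\lambda_0$, the argument of $C$ simplifies to $\lambda_0(q_0)\cdot q_0/(q_0-1)=(1-\alpha)\ln 2$, so $\partial_t p(0;q_0)=(q_0-1)C((1-\alpha)\ln 2)$ and hence $q'(0)=-(q_0-1)C((1-\alpha)\ln 2)$, giving~\eqref{eq:hx_b}. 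The main subtlety is the double dependence of $p(t;p_0)$ on $p_0$, through both the initial condition and the ODE coefficient $\lambda_0(p_0)$; fortunately at $t=0$ only the initial-condition contribution enters $\partial_{p_0}p$, so the implicit differentiation remains routine.
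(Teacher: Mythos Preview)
Your proposal is correct and follows essentially the same approach as the paper. Both arguments apply Theorem~\ref{th:hcb} with $\lambda_0$ tied to $p_0$ via $\lambda_0(p_0)=(1-\alpha)(1-p_0^{-1})\ln 2$, then invoke the implicit function theorem to invert the relation and place $q_0$ on the left; the expansion of $q(t)$ follows by differentiating the defining identity at $t=0$. The only cosmetic difference is that the paper parametrizes by the pair $(a,b)=(\ln(p_0-1),\lambda_0)$ together with an explicit constraint $f(a,b)=0$ encoding $\lambda_0=\lambda_0(p_0)$ and computes a $2\times 2$ Jacobian, whereas you eliminate $b$ from the outset and apply the one-variable implicit function theorem to $p(t;p_0)=q_0$; this is a minor streamlining and not a genuinely different route. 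One small point you leave implicit (and the paper makes explicit with an ODE reference) is the smooth dependence of $p(t;p_0)$ on the parameter $p_0$, needed to justify the implicit function theorem; this is standard, but worth a sentence.
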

\begin{remark} Note that the standard hypercontractivity estimate~\cite{EN66,Bonami1970,Beckner75,gross1975logarithmic} yields the same result without
restriction on the size of the set $A$ but with a strictly worse (larger) function $q(t) = (q_0-1)e^{-2t}+1$. See~\cite[Remark 3]{ps16}.
\end{remark}
\begin{proof}
	Denote by $u_f(a, b, t)$ the solution of the ordinary differential equation (ODE)
	$$ {d\over dt} u(t) = C(b(1+e^{-u(t)}))\,,$$
	with $u(0) = a$. Here $C(\cdot)$ is a function defined in~\eqref{eq:hcb_0}, $a \in \mathbb{R}$ and $0< b
	< (1+e^{-a})^{-1}\ln 2$. For a fixed $a,b$ the standard results on ODEs imply that
	this solution exists and is unique in some neighborhood $-\epsilon < t < \epsilon$ of zero. Furthermore, for any
	$a_0,b_0$ satisfying $0<b_0<(1+e^{-a_0})^{-1}\ln 2$ there exists an $\epsilon_1>0$ such that the map
		$$ (a,b,t) \mapsto u_f(a,b,t) $$
	is smooth for $|a-a_0|<\epsilon_1, |b-b_0|<\epsilon_1, |t| < \epsilon_1$ (for both of these results,
	cf.~\cite[Chapter 2, Section 7, Corollary 6]{ArnoldODE}. We set $a_0 = \ln(q_0-1)$ and $b_0 =
	(1-\alpha)(1-q_0^{-1})\ln 2$. We will call triplets $(a,b,t)$ in the above neighborhood of
	$(a_0,b_0,0)$  \textit{admissible}.

	From~\eqref{eq:hcb_1} we have for any admissible $(a,b,s)$ with $s\ge 0$ and any $A$ with $|A|\le 2^{n\alpha}$:
	\begin{equation}\label{eq:hx_2}
			\|T_{e^{-s}} 1_A\|_{1+e^{u_f(a, b, s)}} \le \|1_A\|_{1+e^{a}}\,,
	\end{equation}
	provided that $b(1+e^{-a}) \le (1-\alpha)\ln 2$ (this is just the condition $\|f\|_{p_0} \ge e^{n\lambda_0}
	\|f\|_1$ of Theorem~\ref{th:hcb}).
	
	Our aim is to set $s=t$ in~\eqref{eq:hx_2} and show that there exists a choice of $a = a(t)$ and $b =
	b(t)$ and $\epsilon < \epsilon_1$ such that the following conditions are satisfied: (C1) $a(0)=a_0$, $b(0)=b_0$ and both functions are
	smooth on $|t| <\epsilon$; (C2) for any $|t|<\epsilon$ the triplet $(a(t),b(t),t)$ is admissible; 
	(C3) for each $|t|<\epsilon$ 
		\begin{equation}\label{eq:hx_9}
			\left\{ \begin{array}{ll} 
			b(t)(1+e^{-a(t)}) &= (1-\alpha)\ln 2\\
					 u_f(a(t),b(t),t) &= \ln(q_0-1)
				\end{array} \right. 
\end{equation}				
	It is clear that if indeed such a choice of $a(t),b(t)$ were found we get from~\eqref{eq:hx_2} with $s=t$ the
	statement of the Theorem with $q(t) =1+e^{a(t)}$.

	We claim that it is sufficient to show that the system of equations
	\begin{align}
	 \left\{ \begin{array}{ll} f(a,b) &= 0,\\
					 u_f(a,b,t) &= \ln(q_0-1)
				\end{array} \right. 
	\end{align}
	where $ f(a,b) \eqdef b-(1-\alpha) (1-(1+e^a)^{-1}) \ln 2$, is uniquely solvable (for $a,b$) in the interval $-\epsilon < t < \epsilon$ and that solution $a(t),b(t)$ is
	smooth. Indeed, since the triplet $(a_0,b_0,0)$ is a solution, we get (C1). Smoothness of $a(t),b(t)$ implies
	(C2). And, finally, (C3) is automatic. Smooth solvability, in turn, follows from the fact that the map
		\begin{equation}\label{eq:hx_6}
			(a,b,t) \mapsto (f(a,b), u_f(a,b,t), t) 
		\end{equation}
	has non-trivial Jacobian at $(a_0,b_0,0)$. Indeed, denoting $\partial_x = {\partial \over \partial x}$ 
	the Jacobian is given by
		$$ \mathrm{Jac}(a,b,t) = (\partial_a f) (\partial_b u_f) - (\partial_b f) (\partial_a u_f)\,. $$
	To evaluate this we note an identity $ u_f(a, b, 0) = a$ and thus
		\begin{align} \left.{\partial \over \partial a}\right|_{t=0} u_f(a,b,t) & = 1,\label{eq:hx_4}\\
		   \left.{\partial \over \partial b}\right|_{t=0} u_f(a,b,t) & = 0, \label{eq:hx_5}\\
		   \left.{\partial \over \partial t}\right|_{t=0} u_f(a,b,t) & = C(b (1+e^{-a})). \label{eq:hx_7}
		\end{align}
	Therefore, at $(a=a_0,b=b_0,t=0)$ the Jacobian evaluates to 
		$$ \mathrm{Jac}(a_0,b_0,0) =  -1 \neq 0\,. $$
	Since the Jacobian is non-zero in some neighborhood of $(a_0,b_0,0)$, the map~\eqref{eq:hx_6} can be
	locally inverted, and we take for $a(t), b(t)$ the pre-image of $(0,0,t)$ under~\eqref{eq:hx_6}.

	Finally, we need to show that $q(t) = 1+e^{a(t)}$ satisfies the expansion~\eqref{eq:hx_b}. To that end, we
	differentiate over $t$ the identity
	\begin{align}
	u_f(a(t), b(t), t) = \ln(q_0-1)
	\end{align}  
	to get
	\begin{align} 
	\dot{a}(t) {\partial_a} u_f(a(t),b(t), t) &+ \dot{b}(t) \partial_b u_f(a(t),b(t),t) \nonumber\\
	&+ \partial_t
		u_f(a(t),b(t), t) = 0
		\end{align}
		where $\dot{a}(t) \eqdef {da(t)\over dt}$ and $\dot{b}(t) \eqdef {db(t)\over dt}$.
	At $t=0$ this is evaluated via~\eqref{eq:hx_4}-\eqref{eq:hx_7} to give
	\begin{align}
	\dot{a}(0) + C((1-\alpha)\ln 2) = 0\,.
	\end{align}
	This clearly implies that $q(t) =  1+e^{a(t)}$ satisfies~\eqref{eq:hx_b}.
\end{proof}

The following application of the previous result establishes the hard direction of Theorem~\ref{th:main1}.

\begin{proposition} Fix $\rho\in(0,1)$. Then for any sets 
$A,B$ with $|A|\le 2^{n\alpha}$, $|B|\le 2^{n\alpha}$ we have
\begin{equation}\label{eq:pp_1}
	P_{XY}(A\times B)\leq 2^{-n \psi(\alpha,\rho) }\,,
\end{equation}
where as $\rho\to 1$ we have
	\begin{align}\label{eq:pp_2}
		&\psi(\alpha,\rho) = (1-\alpha) \nonumber\\
		&+ {1\over \ln 2} (1/2 - \sqrt{h^{-1}(\alpha) (1-h^{-1}(\alpha))})
	(1-\rho) \nonumber\\
	&+ o(1-\rho) \,.
\end{align}
\label{prob:EmaxLB}
\end{proposition}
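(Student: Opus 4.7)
The plan is to combine the Cauchy--Schwarz plus semigroup decomposition already set up in equations~\eqref{eq:iterexpt}--\eqref{eq:holder} with the improved hypercontractive inequality of Theorem~\ref{th:hcbx}. Taking $\rho_1 = \rho_2 = \sqrt{\rho}$ in~\eqref{eq:holder}, I get
\begin{align}
P_{XY}(A\times B) \le \|T_{\sqrt{\rho}} \Ind_A\|_2 \cdot \|T_{\sqrt{\rho}} \Ind_B\|_2.
\end{align}
Next I would apply Theorem~\ref{th:hcbx} with $q_0 = 2$ and $t = -\tfrac{1}{2}\ln\rho$ to each of $\Ind_A$ and $\Ind_B$, which is legitimate for $\rho$ close to $1$ since then $t$ lies in the neighborhood of $0$ on which Theorem~\ref{th:hcbx} is valid. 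Using the identity $\|\Ind_A\|_p = (|A|/2^n)^{1/p}$ together with the hypothesis $|A|,|B| \le 2^{n\alpha}$, the hypercontractive bound yields $\|T_{\sqrt{\rho}}\Ind_A\|_2 \le 2^{-n(1-\alpha)/q(t)}$ and identically for $B$, so
\begin{align}
P_{XY}(A\times B) \le 2^{-2n(1-\alpha)/q(t)}.
\end{align}

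It then remains to expand $2(1-\alpha)/q(t)$ to first order as $\rho\to 1$. By~\eqref{eq:hx_b} with $q_0=2$ one has $q(t) = 2 - C((1-\alpha)\ln 2)\,t + O(t^2)$, hence $2/q(t) = 1 + \tfrac12 C((1-\alpha)\ln 2)\,t + O(t^2)$. Substituting $t = -\tfrac12\ln\rho = (1-\rho)/2 + O((1-\rho)^2)$ and the explicit formula
\begin{align}
C((1-\alpha)\ln 2) = \frac{2-4\sqrt{h^{-1}(\alpha)(1-h^{-1}(\alpha))}}{(1-\alpha)\ln 2}
\end{align}
from Theorem~\ref{th:hcb}, the factor $(1-\alpha)$ cancels cleanly and the coefficient of $(1-\rho)$ in $2(1-\alpha)/q(t)$ reduces to $(1/2 - \sqrt{h^{-1}(\alpha)(1-h^{-1}(\alpha))})/\ln 2$. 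Taking $-\tfrac{1}{n}\log$ of~\eqref{eq:pp_1} then matches~\eqref{eq:pp_2} exactly.

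The conceptual step is the use of Theorem~\ref{th:hcbx} in place of classical hypercontractivity. The latter would produce the weaker $q(t) = 1 + e^{-2t}$, independently of $\alpha$, yielding only the bound $2(1-\alpha)/(1+\rho)$ of~\eqref{eq:vHCTub}. The improvement is precisely that for sets of size $2^{n\alpha}$ with $\alpha<1$ one has $-q'(0) = C((1-\alpha)\ln 2) > 2$, and this strict inequality is exactly what produces the $-\sqrt{h^{-1}(\alpha)(1-h^{-1}(\alpha))}$ correction in the first-order expansion. Beyond carefully plugging in the formula for $C(\cdot)$ and performing the Taylor expansion, there is no real obstacle: the work has been pushed entirely into Theorem~\ref{th:hcbx}, so the main point of the argument is organizational, namely to recognize that the symmetric split $\rho_1=\rho_2=\sqrt{\rho}$ and the choice $q_0=2$ are the ones that make the Cauchy--Schwarz step and the support-size-aware hypercontractivity step fit together.
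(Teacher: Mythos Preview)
Your proposal is correct and is essentially identical to the paper's proof: the paper also sets $\rho=e^{-2t}$ (equivalently $\rho_1=\rho_2=\sqrt{\rho}=e^{-t}$), applies Theorem~\ref{th:hcbx} with $q_0=2$ to both $\Ind_A$ and $\Ind_B$, and then expands $2(1-\alpha)/q(t)$ using~\eqref{eq:hx_b} together with $t=(1-\rho)/2+o(1-\rho)$ and the explicit formula for $C((1-\alpha)\ln 2)$.
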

\begin{remark} For bounding $\overline{E}(\alpha,\beta,\rho)$ with $\alpha\neq \beta$ this method does not give a bound
matching that attained by Hamming spheres. The main reason is that if we take $A,B$ as concentric (but grossly unequal)
Hamming balls the Cauchy-Schwarz inequality~\eqref{eq:holder} is applied to functions $T_{\rho_1}\Ind_A$, $T_{\rho_2}\Ind_B$
which have effectively disjoint supports for $\rho\to1$.
\end{remark}
\begin{proof} Let $\rho = e^{-2t}$ for some fixed $t$. Suppose the sets $A,B$ both have sizes at most
$2^{n\alpha}$. Then from Theorem~\ref{th:hcbx}
we obtain
\begin{subequations}\label{eq:hcb_0} 
	\begin{align} 
	& \|T_{e^{-t}} \Ind_A\|_{2} \le \|\Ind_A\|_{p(t)}\\
	 &\|T_{e^{-t}} \Ind_B\|_{2} \le \|\Ind_B\|_{p(t)}\\ 
&p(t) = 2 - (2-1)C((1-\alpha)\ln
	2)t + o(t).
\end{align}	
\end{subequations}
Since $\|\Ind_A\|_q = 2^{-n(1-\alpha)/q}$ we get from~\eqref{eq:holder} the following:
	\begin{align} 
	&{1\over n} \log P_{X,Y}(A\times B) \le -\frac{2}{p(t)}(1-\alpha)\\
	&=-(1-\alpha)\left(1+\frac{t}{2}C((1-\alpha)\ln
	2)+o(t)\right) \\
	&= -(1-\alpha)-\frac{t(1-\alpha)}{2}\frac{2-4\sqrt{h^{-1}(\alpha)(1-h^{-1}(\alpha))}}{(1-\alpha)\ln{2}}\nonumber\\
	&+o(t).
\end{align}
	The statement now follows since $t=\frac{1-\rho}{2}+o(1-\rho)$. 
\end{proof}

\section{Upper Bound on $\underline{E}(\alpha,\beta,\rho)$}
\label{sec:avdist}

Note that
\begin{align}
&P_{XY}(A\times B)=\sum_{a\in A,b\in B} \Pr(X=a,Y=b)\nonumber\\
&=|A|\cdot|B|\nonumber\\
&\cdot\frac{1}{|A|\cdot|B|}\sum_{a\in A,b\in B} 2^{-n}\left(\frac{1+\rho}{2}\right)^n\cdot\left(\frac{1-\rho}{1+\rho}\right)^{d(a,b)}\nonumber\\
&\geq  |A|\cdot|B|\nonumber\\
&\cdot 2^{-n}\left(\frac{1+\rho}{2}\right)^n\cdot\left(\frac{1-\rho}{1+\rho}\right)^{\frac{1}{|A|\cdot|B|}\sum_{a\in A,b\in B}d(a,b)}\label{eq:jensen}\\
&=2^{-n\left(2-\frac{\log{(|A|\cdot|B|)}}{n}-\log(1+\rho)-\frac{\log\frac{1-\rho}{1+\rho}}{|A|\cdot|B|}\sum_{a\in A,b\in B}\frac{d(a,b)}{n}\right)},\label{eq:Plb}
\end{align}
where we have used Jensen's inequality in~\eqref{eq:jensen}. As $\frac{1-\rho}{1+\rho}<1$, we need to upper bound $\frac{1}{|A|\cdot|B|}\sum_{a\in A,b\in B}d(a,b)$ in terms of $|A|$ and $|B|$ in order to further lower bound~\eqref{eq:Plb}. Consequently, we define
\begin{align}
&\bar{d}(n,\alpha,\beta)\nonumber\\
&=\frac{1}{n}\max_{A,B: |A|=2^{n\alpha},|B|=2^{n\beta}} \frac{1}{|A|\cdot|B|}\sum_{a\in A,b\in B}d(a,b)\label{eq:minAvgDist}\\
&\underline{d}(n,\alpha,\beta)\nonumber\\
&=\frac{1}{n}\min_{A,B: |A|=2^{n\alpha},|B|=2^{n\beta}} \frac{1}{|A|\cdot|B|}\sum_{a\in A,b\in B}d(a,b).
\end{align}
With these definitions we relax~\eqref{eq:Plb} to
\begin{align}
&P_{XY}(A\times B) \nonumber\\
&\geq 2^{-n\left((1-\alpha)+(1-\beta)-\log(1+\rho)-\bar{d}(n,\alpha,\beta)\log\frac{1-\rho}{1+\rho}\right)}.\label{eq:Pnlb}
\end{align}
It is obvious that $\bar{d}(n,\alpha,\beta)=1-\underline{d}(n,\alpha,\beta)$, since if the sets $(A,B)$ achieve the minimal average distance, the sets $(A,B'=1^n+B)$ must achieve the maximal average distance. A quantity similar to $\underline{d}(n,\alpha,\beta)$, where the optimization in~\eqref{eq:minAvgDist} is performed over all families $A$ of size $2^{n\alpha}$ while $B=A$ was defined in~\cite[p.10 eq. 1]{ak77}, and its asymptotic (in $n$) value, was characterized in~\cite{aa94}. Below we prove a lower bound on $\underline{d}(n,\alpha,\beta)$. The technique is quite similar to that of~\cite{aa94}, and requires the following simple proposition.

\begin{proposition}
	The function $\varphi(x,y)=h^{-1}(x)*h^{-1}(y)$ is jointly convex in $(x,y)\in[0,1]^2$.
	\label{prop:convexity}
\end{proposition}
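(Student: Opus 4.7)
The plan is to establish joint convexity by a direct Hessian computation. Setting $p=h^{-1}(x)$ and $q=h^{-1}(y)$, so that $\varphi(x,y)=p+q-2pq$, we note that since $h$ is strictly concave and increasing on $[0,1/2]$, its inverse $h^{-1}:[0,1]\to[0,1/2]$ is convex and increasing; hence $(h^{-1})'>0$ and $(h^{-1})''\ge 0$. The Hessian entries are
\begin{align*}
\varphi_{xx}&=(h^{-1})''(x)(1-2q),\quad \varphi_{yy}=(h^{-1})''(y)(1-2p),\\
\varphi_{xy}&=-2(h^{-1})'(x)(h^{-1})'(y),
\end{align*}
and the diagonal entries are nonnegative because $p,q\in[0,1/2]$. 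Joint convexity therefore reduces to the determinant condition $\varphi_{xx}\varphi_{yy}\ge\varphi_{xy}^{2}$.

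The key point I would exploit is that this condition separates neatly: after rearranging
\begin{align*}
\frac{(h^{-1})''(x)(1-2p)}{((h^{-1})'(x))^{2}}\cdot\frac{(h^{-1})''(y)(1-2q)}{((h^{-1})'(y))^{2}}\ge 4,
\end{align*}
and defining $g(x)\triangleq (h^{-1})''(x)(1-2h^{-1}(x))/((h^{-1})'(x))^{2}\ge 0$, the problem reduces to the single-variable inequality $g(x)\ge 2$ for all $x\in[0,1]$. Plugging in the inverse-function formulas $(h^{-1})'=1/h'(p)$, $(h^{-1})''=-h''(p)/(h'(p))^{3}$, together with $h'(p)=\log_{2}\frac{1-p}{p}$ and $h''(p)=-1/(p(1-p)\ln 2)$, a short simplification yields $g(x)=\frac{1-2p}{p(1-p)\ln((1-p)/p)}$, so the target becomes the scalar inequality $1-2p\ge 2p(1-p)\ln\frac{1-p}{p}$.

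I expect this last scalar inequality to be the main analytic step. The natural substitution $t=(1-p)/p\in[1,\infty)$ transforms it into $t^{2}-1\ge 2t\ln t$, or equivalently $t-1/t\ge 2\ln t$ after dividing by $t>0$. Both sides vanish at $t=1$, and the derivative of their difference equals $(1-1/t)^{2}\ge 0$, which closes the argument. The only technical wrinkle is that $(h^{-1})'$ and $(h^{-1})''$ blow up at $x=1$ (where $p=1/2$), so the Hessian analysis is formally valid only on the open interior $(0,1)^{2}$; however, $\varphi$ is continuous on the closed square, with $\varphi(1,y)=1/2$ trivially convex, so joint convexity on the interior extends to the closure by continuity.
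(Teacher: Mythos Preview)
Your proof is correct and complete. It takes a different route from the paper's argument, however.

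The paper proves joint convexity by showing $\mathbb{E}[\varphi(X,Y)]\ge\varphi(\mathbb{E}[X],\mathbb{E}[Y])$ for arbitrary random $(X,Y)$: it rewrites $a*b=\tfrac{1}{2}(1-(1-2a)(1-2b))$, applies Cauchy--Schwarz to separate $X$ from $Y$, and then invokes Jensen together with the scalar fact that $t\mapsto h^{-1}(t)(1-h^{-1}(t))$ is convex (their Proposition~\ref{prop:convexityScalar}). Your approach instead computes the Hessian directly and reduces positive semidefiniteness to the separated inequality $g(x)g(y)\ge 4$, hence to the single-variable bound $g(x)\ge 2$.

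The two routes converge at exactly the same place: your condition $g(x)\ge 2$, rewritten as $(h^{-1})''(x)(1-2h^{-1}(x))\ge 2((h^{-1})'(x))^{2}$, is precisely the second-derivative statement that $x\mapsto h^{-1}(x)(1-h^{-1}(x))$ is convex, i.e.\ the paper's Proposition~\ref{prop:convexityScalar}. And your final substitution $t=(1-p)/p$ leading to $t-1/t\ge 2\ln t$ with derivative $(1-1/t)^{2}\ge 0$ is literally the computation in the paper's appendix proof of that proposition. So both arguments rest on the same analytic core; the paper's Cauchy--Schwarz/Jensen packaging is perhaps slicker and avoids differentiating $h^{-1}$ explicitly, while your Hessian approach is more direct and makes the separability of the determinant condition transparent. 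Your handling of the boundary (Hessian valid on the interior, continuity extends convexity to the closed square) is also fine.
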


The function $\varphi(x,y)$ is plotted in Figure~\ref{fig:conv}. To prove Proposition~\ref{prop:convexity}, we will rely on the following simpler statement, which is essentially proved in~\cite{aa94}. For completeness we provide the proof in the appendix.

\begin{proposition}
	The function $x\mapsto h^{-1}(x)\left(1-h^{-1}(x)\right)$ is convex in $[0,1]$.
	\label{prop:convexityScalar}
\end{proposition}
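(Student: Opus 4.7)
The plan is to reduce the convexity claim to an elementary one-variable calculus inequality via the substitution $p = h^{-1}(x) \in [0,1/2]$. Setting $\phi(x) \eqdef h^{-1}(x)(1-h^{-1}(x))$, we have $\phi(h(p)) = p(1-p)$, and since $h'(p) = \log\frac{1-p}{p}$, two applications of the chain rule yield
\begin{align}
\phi''(x) = \frac{1}{\left(\log\frac{1-p}{p}\right)^{3}}\left[\frac{1-2p}{\ln 2 \cdot p(1-p)} - 2\log\frac{1-p}{p}\right],
\end{align}
where $p = h^{-1}(x)$ on the right-hand side. For $p \in (0,1/2)$ the cubed log is strictly positive, so convexity on $(0,1)$ reduces to the scalar inequality
\begin{align}
\frac{1-2p}{p(1-p)} \;\ge\; 2\ln\frac{1-p}{p}, \qquad p \in (0,1/2).
\end{align}

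To close this out, I would make a further change of variable $t = (1-p)/p \in (1,\infty)$. A short computation shows $\frac{1-2p}{p(1-p)} = t - 1/t$, and the required inequality becomes the well-known bound $t - 1/t \ge 2\ln t$ for $t \ge 1$. This is immediate from noting that $g(t) \eqdef t - 1/t - 2\ln t$ satisfies $g(1)=0$ and $g'(t) = 1 + 1/t^2 - 2/t = (1 - 1/t)^2 \ge 0$, with strict inequality for $t > 1$. The two endpoints $x = 0$ and $x = 1$ are then covered by continuity of $\phi$ on $[0,1]$.

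The only piece of real bookkeeping is the chain-rule derivation of the displayed formula for $\phi''$, but there is no genuine obstacle: the specific way $p(1-p)$ enters is precisely what allows $t = (1-p)/p$ to collapse everything to $t - 1/t - 2\ln t$. Moreover, since $g'(t) > 0$ strictly for $t > 1$, the argument actually shows strict convexity on the open interval $(0,1)$, which may be convenient when this proposition is used to establish the joint convexity claim in Proposition~\ref{prop:convexity}.
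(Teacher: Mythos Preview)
Your proof is correct and follows essentially the same route as the paper: both arguments reduce, via the substitution $t=(1-p)/p$ (the paper writes $y$), to the elementary inequality $t-1/t\ge 2\ln t$ for $t\ge 1$, verified by $g(1)=0$ and $g'(t)=(1-1/t)^2\ge 0$. The only cosmetic difference is that the paper phrases the claim as concavity of $(1-2h^{-1}(x))^2$ and checks that its first derivative is decreasing, whereas you compute $\phi''$ directly; the underlying computation is the same.
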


\begin{proof}[Proof of Proposition~\ref{prop:convexity}]
Let $(X,Y)$ be two (possibly dependent) random variables on $[0,1]^2$. We use the identity $a*b=\frac{1}{2}(1-(1-2a)(1-2b))$ to write
\begin{align}
&\mathbb{E}[\varphi(X,Y)]\nonumber\\
&=\frac{1}{2}\left(1-\mathbb{E}\left[\left(1-2h^{-1}(X)\right)\left(1-2h^{-1}(Y)\right)\right]\right)\nonumber\\
&\geq \frac{1}{2}\bigg(1-\sqrt{\mathbb{E}\left[\left(1-2h^{-1}(X)\right)^2\right]}\\
&~~~~~~~~~~~~~~\sqrt{\mathbb{E}\left[\left(1-2h^{-1}(Y)\right)^2\right]}\bigg)\label{eq:CSineq}\\
&\geq \frac{1}{2}\bigg(1-\sqrt{\left(1-2h^{-1}(\mathbb{E}\left[X\right])\right)^2}\nonumber\\
&~~~~~~~~~~~~~~\sqrt{\left(1-2h^{-1}(\mathbb{E}\left[Y\right])\right)^2}\bigg)\label{eq:scalarconvexity}\\
&=\varphi(\mathbb{E}[X],\mathbb{E}[Y]),
\end{align}
where~\eqref{eq:CSineq} follows from the Cauchy-Schwarz inequality, and~\eqref{eq:scalarconvexity} from Jensen's inequality and the fact that $t\mapsto(1-2h^{-1}(t))^2=1-4h^{-1}(t)\left(1-h^{-1}(t)\right)$ is concave due to Proposition~\ref{prop:convexityScalar}.
\end{proof}

\begin{figure*}[ht]
	\centering
	\includegraphics[width=1.8\columnwidth]{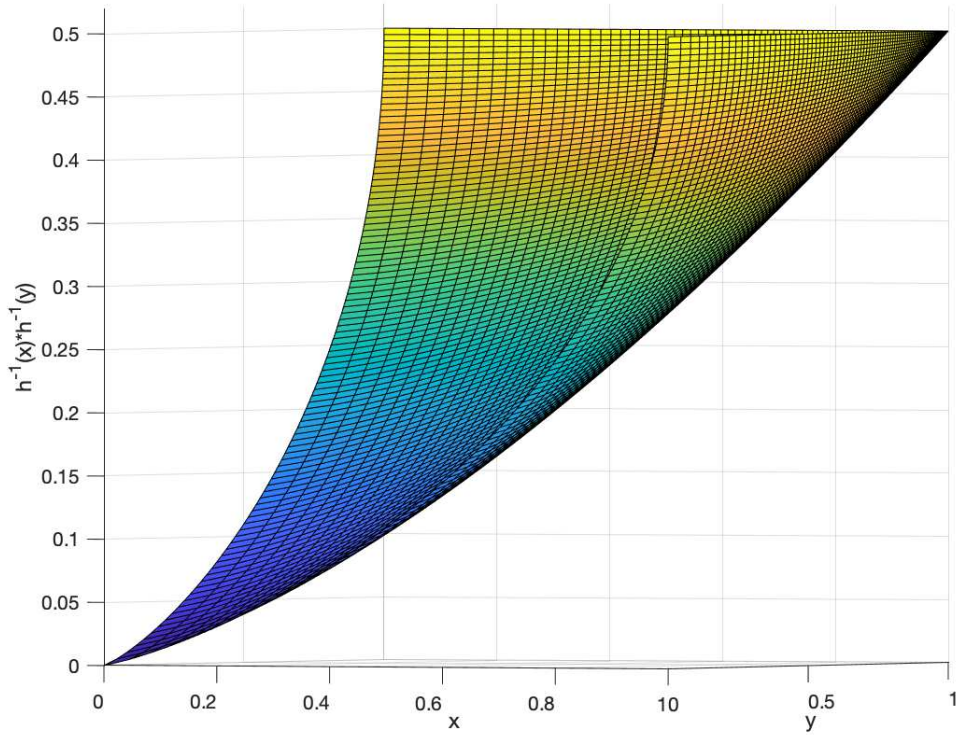}
	\caption{\label{fig:conv}  Illustration of the function $h^{-1}(x)*h^{-1}(y)$.}	
\end{figure*}

\begin{lemma}
	For any two independent $n$-dimensional random binary vectors $V$ and $W$  
	\begin{align}
	h^{-1}\left(\frac{H(V)}{n}\right)&*h^{-1}\left(\frac{H(W)}{n}\right)\leq\frac{\mathbb{E}d(V,W)}{n}\nonumber\\
	&\leq 1-h^{-1}\left(\frac{H(V)}{n}\right)*h^{-1}\left(\frac{H(W)}{n}\right).
	\end{align}
	\label{lem:dmaxim}
\end{lemma}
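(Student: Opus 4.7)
The plan is to reduce to a coordinatewise statement and then invoke the joint convexity of $\varphi(x,y) = h^{-1}(x) * h^{-1}(y)$ established in Proposition~\ref{prop:convexity}. Set $p_i = \Pr(V_i = 1)$ and $q_i = \Pr(W_i = 1)$. Independence of $V$ and $W$ together with linearity of expectation yield
\begin{align}
\frac{\mathbb{E} d(V,W)}{n} = \frac{1}{n}\sum_{i=1}^n \Pr(V_i \neq W_i) = \frac{1}{n}\sum_{i=1}^n p_i * q_i,
\end{align}
while subadditivity of Shannon entropy gives $H(V)/n \leq \frac{1}{n}\sum_i h(p_i)$ and likewise for $W$.

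For the lower bound I would proceed in three steps. First, establish the pointwise inequality $p * q \geq h^{-1}(h(p)) * h^{-1}(h(q))$ for all $p,q\in[0,1]$: since $h^{-1}(h(p)) = \min(p, 1-p)$ and the identity $(1-a)*b = 1 - a*b$ shows that the right-hand side equals $\min(p*q,\,1-p*q)$, the inequality is immediate. Second, apply Jensen's inequality to the jointly convex $\varphi$:
\begin{align}
\frac{1}{n}\sum_i \varphi(h(p_i), h(q_i)) \geq \varphi\!\left(\frac{1}{n}\sum_i h(p_i),\ \frac{1}{n}\sum_i h(q_i)\right).
\end{align}
Third, observe that $\varphi$ is non-decreasing in each argument on $[0,1]$: $h^{-1}$ is increasing, and $a\mapsto a*b$ has derivative $1-2b\geq 0$ whenever $b\in[0,1/2]$, which is always the case for $b = h^{-1}(\cdot)$. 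Combining this monotonicity with the entropy-subadditivity bounds on $\frac{1}{n}\sum_i h(p_i)$ and $\frac{1}{n}\sum_i h(q_i)$ delivers $\mathbb{E} d(V,W)/n \geq \varphi(H(V)/n, H(W)/n)$, which is the claimed lower bound.

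The upper bound then comes for free by applying the lower bound to the pair $(V, W')$ with $W' = 1^n \oplus W$: bitwise complementation preserves entropy ($H(W') = H(W)$) and sends $d(V,W')$ to $n - d(V,W)$, so the lower bound for $(V,W')$ is literally the upper bound for $(V,W)$. The main ingredient throughout is Proposition~\ref{prop:convexity}; everything else (subadditivity of entropy, monotonicity of $\varphi$, and the opening symmetrization $p*q \geq \min(p,1-p)*\min(q,1-q)$) is routine. The only mildly subtle point is the symmetrization step, forced by the fact that $h^{-1}$ is defined only on $[0,1/2]$, but this is algebraic rather than a real obstacle --- I do not expect any serious difficulty in executing the plan.
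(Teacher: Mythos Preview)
Your proposal is correct and follows essentially the same route as the paper: reduce $\mathbb{E}d(V,W)/n$ to the average of coordinate convolutions $p_i*q_i$, invoke subadditivity of entropy, and apply Jensen with the joint convexity of $\varphi$ from Proposition~\ref{prop:convexity}; the upper bound is obtained in both by replacing $W$ with $1^n\oplus W$. The only cosmetic difference is that the paper dispatches the range issue with a one-line ``WLOG $a_i,b_i\le 1/2$'' and leaves the monotonicity of $\varphi$ implicit when relaxing the entropy constraints to inequalities, whereas you make both the symmetrization $p*q\ge \min(p,1-p)*\min(q,1-q)$ and the monotonicity step explicit---arguably a cleaner write-up of the same argument.
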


\begin{proof}
	Let $V$ and $W$ be two independent random vectors with $H(V)=n\alpha$ and $H(W)=n\beta$. Further, let $a_i\triangleq \Pr(V_i=1)$, $b_i\triangleq \Pr(W_i=1)$, be the induced marginal distributions for each coordinate.
	Our goal is to minimize and maximize $\sum_{i=1}^n a_i*b_i$ under the entropy constraints $H(V)=n\alpha,H(W)=n\beta$. We may and will assume without loss of generality that $a_i,b_i\leq 1/2$ for all $i$.
	We have
	\begin{align}
	\inf_{\substack{{V,W:}\\{H(V)=n\alpha}\\{H(W)=n\beta}}}&\sum_{i=1}^n a_i*b_i\geq \inf_{\substack{{V,W:}\\{H(V)\geq n\alpha}\\{H(W)\geq n\beta}}}\sum_{i=1}^n a_i*b_i\nonumber\\
	&=\inf_{\substack{{\{a_i\},\{b_i\}:}\\{\sum_{i=1}^n h(a_i)\geq n\alpha}\\{\sum_{i=1}^n h(b_i)\geq n\beta}}}\sum_{i=1}^n a_i*b_i\label{eq:marginalization}\\
	&=\inf_{\substack{{\{\alpha_i\},\{\beta_i\}:}\\{\tfrac{1}{n}\sum_{i=1}^n \alpha_i \geq  \alpha}\\{\frac{1}{n}\sum_{i=1}^n \beta_i \geq \beta}}}\sum_{i=1}^n h^{-1}(\alpha_i)*h^{-1}(\beta_i)
	\end{align}
	where~\eqref{eq:marginalization} follows since the cost function $\sum_{i=1}^n a_i*b_i$ depends only on the marginal distributions, and for every feasible distribution $V,W$ the product of the marginalized distributions is also feasible. Our lower bound now immediately follows from Proposition~\ref{prop:convexity}. For the upper bound, note that if $V$ and $W$ minimize $\mathbb{E}d(V,W)$ under the entropy constraints, $V$ and $W'=W+1^n$ maximizes the expected distance under the same entropy constraints.
\end{proof}

Taking $V\sim\Unif(A)$ and $W\sim\Unif(B)$, we immediately get the following.
\begin{corollary}
\begin{align}
\bar{d}(n,\alpha,\beta)&\leq n\left(1-h^{-1}\left(\alpha\right)*h^{-1}\left(\beta\right)\right),\\
\underline{d}(n,\alpha,\beta)&\geq nh^{-1}\left(\alpha\right)*h^{-1}\left(\beta\right).
\end{align}
\label{cor:d}
\end{corollary}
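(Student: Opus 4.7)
The plan is to obtain the corollary by a direct specialization of Lemma~\ref{lem:dmaxim} to the case where the random vectors are uniformly distributed on the sets $A$ and $B$.

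First I would take $V \sim \Unif(A)$ and $W \sim \Unif(B)$ to be independent. Because $V$ is uniform on a set of cardinality $2^{n\alpha}$ we get $H(V) = \log|A| = n\alpha$ and similarly $H(W) = n\beta$, so the quantities $h^{-1}(H(V)/n)$ and $h^{-1}(H(W)/n)$ appearing in Lemma~\ref{lem:dmaxim} reduce to $h^{-1}(\alpha)$ and $h^{-1}(\beta)$ respectively. Next, by independence and uniformity,
\begin{align}
\Expt\, d(V,W) \;=\; \frac{1}{|A|\cdot|B|}\sum_{a\in A,\, b\in B} d(a,b).
\end{align}

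Plugging these two identities into the two-sided inequality of Lemma~\ref{lem:dmaxim} and dividing by $n$, I obtain, for every admissible pair $(A,B)$ with $|A|=2^{n\alpha}$ and $|B|=2^{n\beta}$,
\begin{align}
h^{-1}(\alpha)*h^{-1}(\beta) \;\leq\; \frac{1}{n|A|\,|B|}\sum_{a\in A,\, b\in B} d(a,b) \;\leq\; 1 - h^{-1}(\alpha)*h^{-1}(\beta).
\end{align}
Since these bounds are uniform in the choice of $(A,B)$, I can pass to the extremal pairs defining $\underline{d}(n,\alpha,\beta)$ and $\bar{d}(n,\alpha,\beta)$: the left inequality is inherited by the minimizing pair, and the right inequality is inherited by the maximizing pair, which is exactly the content of the corollary.

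There is no real obstacle left at this point: the entire substantive work has already been absorbed into Lemma~\ref{lem:dmaxim} (whose proof in turn relies on the joint convexity supplied by Proposition~\ref{prop:convexity} together with a single Cauchy--Schwarz step and a marginalization argument). The corollary is simply the instantiation of that lemma at the uniform distributions, with the observation that shifting $B$ by $1^n$ interchanges ``minimum average distance'' with ``maximum average distance'' is already folded into Lemma~\ref{lem:dmaxim}.
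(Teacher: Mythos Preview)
Your proposal is correct and is exactly the paper's own argument: the paper simply says ``Taking $V\sim\Unif(A)$ and $W\sim\Unif(B)$, we immediately get the following,'' which is precisely the specialization of Lemma~\ref{lem:dmaxim} you spell out.
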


Combining~\eqref{eq:Pnlb} and Corollary~\ref{cor:d}, gives
\begin{align}
\underline{E}(\alpha,\beta,\rho)&\leq (1-\alpha)+(1-\beta)-\log(1+\rho)\nonumber\\
&-\left(1-h^{-1}\left(\alpha\right)*h^{-1}\left(\beta\right)\right)\log\frac{1-\rho}{1+\rho}\nonumber\\
&=(1-\alpha)+(1-\beta)-\log(1-\rho)\nonumber\\
&+\left(h^{-1}\left(\alpha\right)*h^{-1}\left(\beta\right)\right)\log\frac{1-\rho}{1+\rho}.\label{eq:trivial}
\end{align}
We have therefore obtained the following.

\begin{proposition}
We have
\begin{align}
\underline{E}(\alpha,\beta,\rho)&\leq (1-\alpha)+(1-\beta)\nonumber\\
&+\rho\log{(e)}\left(1-2h^{-1}\left(\alpha\right)*h^{-1}\left(\beta\right)\right)+o(\rho).\label{eq:underEub}
\end{align}
\label{prop:EminUB}
\end{proposition}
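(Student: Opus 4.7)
The plan is direct: starting from the bound~\eqref{eq:trivial}, which has already been established via Jensen's inequality on $P_{XY}(A\times B)$ combined with the average-distance bound of Corollary~\ref{cor:d}, I would simply perform a first-order Taylor expansion in $\rho$ around $\rho = 0$. The asymptotic regime $\rho \to 0$ is the only piece that is not already on the page.

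The two needed expansions are
\begin{align}
-\log(1-\rho) &= \rho\log e + O(\rho^2),\\
\log\frac{1-\rho}{1+\rho} &= -2\rho\log e + O(\rho^2),
\end{align}
both of which follow from $\ln(1\pm\rho) = \pm\rho + O(\rho^2)$ and the change of base $\log = \ln/\ln 2$. Substituting these into the right-hand side of~\eqref{eq:trivial} and collecting the $\rho$-linear terms yields
\begin{align}
\underline{E}(\alpha,\beta,\rho) &\le (1-\alpha)+(1-\beta)\nonumber\\
&\quad + \rho\log(e)\bigl(1-2h^{-1}(\alpha)*h^{-1}(\beta)\bigr) + o(\rho),
\end{align}
which is exactly the statement.

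Since both~\eqref{eq:trivial} and the elementary logarithm expansions are in hand, there is no real obstacle internal to this proposition. The essential work has been absorbed upstream, in two steps. First, the Jensen inequality in~\eqref{eq:jensen} converts $P_{XY}(A\times B)$ into an exponential whose exponent is linear in the \emph{average} Hamming distance between $A$ and $B$, which is what lets the average-distance bound enter multiplicatively with $\log\frac{1-\rho}{1+\rho}$. Second, the correct prefactor $1 - h^{-1}(\alpha)*h^{-1}(\beta)$ on the average distance is delivered by Corollary~\ref{cor:d}, which in turn rests on Lemma~\ref{lem:dmaxim} and the joint convexity of $(x,y)\mapsto h^{-1}(x)*h^{-1}(y)$ from Proposition~\ref{prop:convexity}. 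A useful sanity check at this stage is that the leading coefficient $\log e\,(1-2h^{-1}(\alpha)*h^{-1}(\beta))$ matches exactly the lower bound of Proposition~\ref{prop:EminLB} obtained from anti-concentric Hamming spheres; this confirms that no slack is introduced in the Jensen step at first order in $\rho$, and that the expansion above is tight to order $\rho$.
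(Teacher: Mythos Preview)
Your proposal is correct and is exactly the paper's approach: the proposition is stated immediately after~\eqref{eq:trivial} with the words ``We have therefore obtained the following,'' so the only remaining step is precisely the first-order Taylor expansion in $\rho$ that you spell out. Your sanity check against Proposition~\ref{prop:EminLB} is also apt, since matching those two bounds is how Theorem~\ref{thm:Emin} is assembled.
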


\begin{remark}
In~\cite{morss06} the bound
\begin{align}
\underline{E}(\alpha,\beta,\rho)\leq \frac{(1-\alpha)+(1-\beta)+2\rho\sqrt{(1-\alpha)(1-\beta)}}{1-\rho^2}\label{eq:MORSS}
\end{align}
was proved, using reverse hypercontractivity. It is easy to verify that for $\alpha=\beta$ the bound~\eqref{eq:trivial} is strictly better than~\eqref{eq:MORSS} for all $\alpha<1-\tfrac{1-\rho}{2\rho}\log\left(\frac{1}{1-\rho}\right)$. Moreover, for any $0<\alpha,\beta<1$ the bound~\eqref{eq:trivial} is better than~\eqref{eq:MORSS} for $\rho$ large enough. The reverse hypercontractivity bound states that  for $p<1$ we have $\|T_{\rho}f\|_{q(\rho,p)}\geq \|f\|_p$ where $q(\rho,p)=1-\frac{1-p}{\rho^2}<p$ for $\rho<1$. The weakness of this bound in our setup is that the function $q(\rho,p)$ does not depend on the support of $f$, which is exponentially small. It is quite plausible that deriving support dependent reverse hypercontractivity bounds, analogous to the support dependent hypercontractivity bounds of~\cite{ps16}, would result in tighter upper bounds on  $\underline{E}(\alpha,\beta,\rho)$ in the high-correlation regime.
\label{remark:morss}
\end{remark}


\section*{Acknowledgement}

The authors would like to thank the anonymous reviewers and the associate editor for their excellent suggestions, and in particular for a simplification of the proof of Proposition~\ref{prop:convexity}.

\begin{appendix}
\section{Proof of Proposition~\ref{prop:convexityScalar}}
Let $\phi(x)=\left(1-2 h^{-1}(x)\right)^2$. Since $h^{-1}(x)\left(1-h^{-1}(x)\right)=1-\frac{\phi(x)}{4}$, it suffices to show that $x\mapsto\phi(x)$ is concave. We have
\begin{align}
\phi'(x)&=-\frac{4}{\log\left(\frac{1-h^{-1}(x)}{h^{-1}(x)}\right)}\left(1-2h^{-1}(x)\right)\nonumber\\
&= -\frac{4}{\log{e}}v(h^{-1}(x)),
\end{align}
where
\begin{align}
v(t)=\frac{1-2t}{\ln\left(\frac{1-t}{t}\right)}.
\end{align}
Showing that $x\mapsto\phi(x)$ is concave is equivalent to showing that $x\mapsto\phi'(x)$ is decreasing, which in turn is equivalent to showing that $t\mapsto v(t)$ is increasing in $(0,1/2)$, due to monotonicity of $x\mapsto h^{-1}(x)$. Thus, it remains to show that $v'(t)\geq 0$ for $t\in(0,1/2)$. Let $y=y_t=\frac{1-t}{t}\in(1,\infty)$. We have that $v'(t)=\frac{\frac{1-2t}{t(1-t)}-2\ln\left(\frac{1-t}{t}\right)}{\ln^2\left(\frac{1-t}{t}\right)}$ and since $\frac{1-2t}{t(1-t)}=\frac{y^2-1}{y}$, it suffices to show that $g(y)=\frac{y^2-1}{y}-2\ln(y)\geq 0$ for all $y>1$. Noting that $g(1)=0$ and $g'(y)=1+\frac{1}{y^2}-\frac{2}{y}=\frac{(y-1)^2}{y^2}\geq 0$ for all $y\geq 1$, we see that indeed $g(y)\geq 0$ for all $y\geq 1$, which establishes our claim.
\end{appendix}

\bibliographystyle{IEEEtran}
\bibliography{ReportBib}

\begin{thebibliography}{10}
\providecommand{\url}[1]{#1}
\csname url@samestyle\endcsname
\providecommand{\newblock}{\relax}
\providecommand{\bibinfo}[2]{#2}
\providecommand{\BIBentrySTDinterwordspacing}{\spaceskip=0pt\relax}
\providecommand{\BIBentryALTinterwordstretchfactor}{4}
\providecommand{\BIBentryALTinterwordspacing}{\spaceskip=\fontdimen2\font plus
\BIBentryALTinterwordstretchfactor\fontdimen3\font minus
  \fontdimen4\font\relax}
\providecommand{\BIBforeignlanguage}[2]{{%
\expandafter\ifx\csname l@#1\endcsname\relax
\typeout{** WARNING: IEEEtran.bst: No hyphenation pattern has been}%
\typeout{** loaded for the language `#1'. Using the pattern for}%
\typeout{** the default language instead.}%
\else
\language=\csname l@#1\endcsname
\fi
#2}}
\providecommand{\BIBdecl}{\relax}
\BIBdecl

\bibitem{ps16}
Y.~Polyanskiy and A.~Samorodnitsky, ``Improved log-{S}obolev inequalities,
  hypercontractivity and uncertainty principle on the hypercube,'' \emph{J.
  Functional Analysis}, 2019, to appear.

\bibitem{EN66}
E.~Nelson, ``A quartic interaction in two dimensions,'' in \emph{Mathematical
  Theory of Elementary Particles}, R.~Goodman and I.~Segal, Eds.\hskip 1em plus
  0.5em minus 0.4em\relax Cambridge, MA: M.I.T. Press, 1966.

\bibitem{Bonami1970}
A.~Bonami, ``tude des coefficients de {F}ourier des fonctions de $l^p(g)$,''
  \emph{Annales de l'institut Fourier}, vol.~20, no.~2, pp. 335--402, 1970.

\bibitem{Beckner75}
W.~Beckner, ``Inequalities in {F}ourier analysis on $\mathbb{R}^n$,''
  \emph{Proceedings of the National Academy of Sciences}, vol.~72, no.~2, pp.
  638--641, 1975.

\bibitem{gross1975logarithmic}
L.~Gross, ``Logarithmic {S}obolev inequalities,'' \emph{American Journal of
  Mathematics}, vol.~97, no.~4, pp. 1061--1083, 1975.

\bibitem{ODonnellBook}
R.~O'Donnell, \emph{Analysis of {B}oolean functions}.\hskip 1em plus 0.5em
  minus 0.4em\relax Cambridge University Press, 2014.

\bibitem{CB82}
C.~Borell, ``Positivity improving operators and hypercontractivity,''
  \emph{Math. Zeit.}, vol. 180, pp. 225--234, 1982.

\bibitem{morss06}
E.~Mossel, R.~O'Donnell, O.~Regev, J.~E. Steif, and B.~Sudakov,
  ``Non-interactive correlation distillation, inhomogeneous {M}arkov chains,
  and the reverse {B}onami-{B}eckner inequality,'' \emph{Israel Journal of
  Mathematics}, vol. 154, no.~1, pp. 299--336, 2006.

\bibitem{bogdanov2011extracting}
A.~Bogdanov and E.~Mossel, ``On extracting common random bits from correlated
  sources,'' \emph{IEEE Transactions on information theory}, vol.~57, no.~10,
  pp. 6351--6355, 2011.

\bibitem{akkn18}
P.~{Austrin}, P.~{Kaski}, M.~{Koivisto}, and J.~{Nederlof}, ``Sharper upper
  bounds for unbalanced uniquely decodable code pairs,'' \emph{IEEE
  Transactions on Information Theory}, vol.~64, no.~2, pp. 1368--1373, Feb
  2018.

\bibitem{Lindstrom69}
B.~Lindstr{\"{o}}m, ``Determination of two vectors from the sum,''
  \emph{Journal of Combinatorial Theory}, vol.~6, no.~4, pp. 402--407, 1969.

\bibitem{Tilborg78}
H.~van Tilborg, ``An upper bound for codes in a two-access binary erasure
  channel (corresp.),'' \emph{IEEE Transactions on Information Theory},
  vol.~24, no.~1, pp. 112--116, 1978.

\bibitem{kl78}
T.~Kasami and S.~Lin, ``Bounds on the achievable rates of block coding for a
  memoryless multiple-access channel,'' \emph{IEEE Transactions on Information
  Theory}, vol.~24, no.~2, pp. 187--197, 1978.

\bibitem{Weldon78}
E.~Weldon, ``Coding for a multiple-access channel,'' \emph{Information and
  Control}, vol.~36, no.~3, pp. 256--274, 1978.

\bibitem{klwy83}
T.~Kasami, S.~Lin, V.~Wei, and S.~Yamamura, ``Graph theoretic approaches to the
  code construction for the two-user multiple-access binary adder channel,''
  \emph{IEEE Transactions on Information Theory}, vol.~29, no.~1, pp. 114--130,
  1983.

\bibitem{ul98}
R.~Urbanke and Q.~Li, ``The zero-error capacity region of the 2-user
  synchronous {BAC} is strictly smaller than its shannon capacity region,'' in
  \emph{Information Theory Workshop}, Jun 1998, p.~61.

\bibitem{os16}
O.~Ordentlich and O.~Shayevitz, ``An upper bound on the sizes of
  multiset-union-free families,'' \emph{SIAM Journal on Discrete Mathematics},
  vol.~30, no.~2, pp. 1032--1045, 2016.

\bibitem{ap15}
G.~{Ajjanagadde} and Y.~{Polyanskiy}, ``Adder {MAC} and estimates for {R}ényi
  entropy,'' in \emph{2015 53rd Annual Allerton Conference on Communication,
  Control, and Computing (Allerton)}, Sep. 2015, pp. 434--441.

\bibitem{kleitman1966}
D.~J. Kleitman, ``On a combinatorial conjecture of {E}rd{\"o}s,'' \emph{Journal
  of Combinatorial Theory}, vol.~1, no.~2, pp. 209--214, 1966.

\bibitem{ak77}
R.~Ahlswede and G.~O. Katona, ``Contributions to the geometry of {H}amming
  spaces,'' \emph{Discrete Mathematics}, vol.~17, no.~1, 1977.

\bibitem{ff81}
P.~Frankl and Z.~Füredi, ``A short proof for a theorem of {H}arper about
  {H}amming-spheres,'' \emph{Discrete Mathematics}, vol.~34, no.~3, pp. 311 --
  313, 1981.

\bibitem{ks19}
N.~Kirshner and A.~Samorodnitsky, ``A moment ratio bound for polynomials and
  some extremal properties of {K}rawchouk polynomials and {H}amming spheres,''
  \emph{arXiv preprint arXiv:1909.11929}, 2019.

\bibitem{RG68}
R.~G. Gallager, \emph{Information Theory and Reliable Communication}.\hskip 1em
  plus 0.5em minus 0.4em\relax New York: Wiley, 1968.

\bibitem{ArnoldODE}
V.~Arnold, \emph{Ordinary Differential Equations}.

\bibitem{aa94}
R.~Ahlswede and I.~Althofer, ``The asymptotic behavior of diameters in the
  average,'' \emph{Journal of Combinatorial Theory, Series B}, vol.~61, no.~2,
  pp. 167 -- 177, 1994.

\end{thebibliography}

\begin{IEEEbiographynophoto}{Or Ordentlich} is a senior lecturer (assistant professor) in the School of Computer Science and Engineering at the Hebrew University of Jerusalem. He received the B.Sc. (cum laude), M.Sc. (summa cum laude),
	and Ph.D. degrees from Tel Aviv University, Israel, in 2010, in 2011, and
	2016, respectively, all in electrical engineering. During the years 2015-2017 he was a postdoctoral fellow in the Laboratory for Information and Decision Systems at the Massachusetts Institute of Technology (MIT), and in the Department of Electrical and Computer Engineering at Boston University. 
\end{IEEEbiographynophoto}
\begin{IEEEbiographynophoto}{Yury Polyanskiy}
 is an
	Associate Professor of Electrical Engineering and Computer Science and a member of IDSS and LIDS at MIT.
	Yury received M.S. degree in applied mathematics and physics from the Moscow Institute of Physics and Technology,
	Moscow, Russia in 2005 and Ph.D. degree in electrical engineering from Princeton
	University, Princeton, NJ in 2010. His research interests span information theory, statistical learning, error-correcting codes, wireless communication and fault tolerance.
	Dr. Polyanskiy won the 2020 IEEE Information Theory Society James Massey Award, 2013 NSF CAREER award and 2011 IEEE Information Theory Society Paper Award.
\end{IEEEbiographynophoto}
\begin{IEEEbiographynophoto}{Ofer Shayevitz} received the B.Sc. degree from the Technion Institute of Technology, Haifa, Israel, in 1997 and the M.Sc. and Ph.D. degrees from the Tel-Aviv University, Tel Aviv, Israel, in 2004 and 2009, respectively, all in electrical engineering. He is currently an Associate Professor in the Department of EE - Systems at Tel Aviv University, and serves as the head of the Advanced Communication Center (ACC). Ofer's research spans a wide cross-section of problems in information theory, statistical signal processing, and discrete mathematics. He is the recipient of the European Research Council (ERC) Starting Grant (2015--2020), and his group's research activities have further been supported by Israel Science Foundation grants (2014--2022), and the Marie Curie Grant (2014--2018). Ofer is also actively involved in the Israeli hi-tech industry, and regularly consults to various startup companies. Before joining Tel Aviv University, Ofer was a postdoctoral fellow in the Information Theory and Applications (ITA) Center at the University of California, San Diego (2008 - 2011), and worked as a quantitative analyst with the D.E. Shaw group in New York (2011 - 2013). Prior to his graduate studies, he served as a digital communication engineer and team leader in the Israeli Defense Forces (1997 - 2003), and worked in statistical signal processing at CellGuide (2003 - 2004). 
\end{IEEEbiographynophoto}

\end{document}